\documentclass[11pt,letterpaper]{article}
\usepackage[margin=1in]{geometry}
\usepackage{amsmath,amssymb,amsfonts,setspace,url,mathrsfs}
\usepackage{bm,enumerate}
\usepackage[boxed,linesnumbered,noend]{algorithm2e}
\DontPrintSemicolon
\let\oldnl\nl
\newcommand{\nonl}{\renewcommand{\nl}{\let\nl\oldnl}}
\usepackage{algpseudocode}
\usepackage{latexsym}
\usepackage{amsthm}
\usepackage{subfig}
\usepackage{float}
\usepackage{stmaryrd}
\usepackage{fancybox}
\usepackage{bigstrut,array,multirow}
\usepackage{here}
\usepackage{color}
\usepackage{url}
\usepackage{hyperref}
\usepackage[capitalise]{cleveref}
\usepackage{tikz}
\usepackage{comment}

\newcommand{\jsay}[1]{{\color{red} Josh says: #1}}

\newcommand{\GpI}{\textsc{GpI}\xspace}
\newcommand{\pblmDef}{\textsc{Group Epimorphism Problem}\xspace}
\newcommand{\GrS}{\textsc{GpEpi}\xspace}
\newcommand{\Epi}{\text{epivalent}\xspace}
\newcommand{\RepEpi}{\textsc{SInducedRepEpivalent}\xspace}
\newcommand{\codeIso}{\textsc{Code Isomorphism}\xspace}
\newcommand{\ep}{\text{exponent}\xspace}

\newcommand{\Z}{\mathbb{Z}}

\newcommand{\aut}{\mathrm{Aut}}

\newcommand{\gen}[1]{\langle #1 \rangle}

\newcommand{\mybar}[1]{\lambda}

\newcommand{\ord}[1]{\textrm{ord}{#1}}

\newcommand{\F}{\field}

\newcommand{\poly}{\mathrm{poly}}

\newcommand{\rk}{\mathrm{rank}}

\newcommand{\NP}{\mathrm{NP}}

\newcommand{\GL}{\mathrm{GL}}
\newcommand{\field}{\mathbb{F}}

\DeclareMathOperator{\Soc}{Soc}
\newtheorem{theorem}{Theorem}
\newtheorem{corollary}{Corollary}

\newtheorem{problem}{Problem}
\newtheorem{observation}{Observation}

\newtheorem{claim}{Claim}
\newtheorem{lemma}{Lemma}
\newtheorem{open}{Open Question}
\theoremstyle{definition}
\newtheorem{definition}{Definition}

\newcommand{\Hall}[2]{\mathrm{Coprime}(#1,#2)}
\newcommand{\ElemAb}{\mathrm{ElemAb}}
\newcommand{\Ab}{\mathrm{Ab}}
\newcommand{\Abp}{\mathrm{Ab_p}}
\newcommand{\Cyc}{\mathrm{Cyc}}

\newenvironment{proof-sketch}{\trivlist\item[]\emph{Brief proof sketch}:}%
{\unskip\nobreak\hskip 1em plus 1fil\nobreak$\Box$
\parfillskip=0pt%
\endtrivlist}
\begin{document}

\sloppy
\title{Algorithms for Finite Group Epimorphism Testing\footnote{A preliminary version of this article appeared in the proceedings of the 53rd EATCS International Colloquium on Automata, Languages, and Programming (ICALP 2026)}}
\author{
Joshua A. Grochow\\
Department of Computer Science, University of Colorado Boulder \\ Department of Mathematics, University of Colorado Boulder\\ jgrochow@colorado.edu 
\and Pranjal Srivastava\\
Department of Math and Computing \\ Indian Institute of Information Technology Vadodara\\
pranjal.srivastava194@gmail.com
\and Dhara Thakkar\\
Graduate School of Mathematics\\
Nagoya University\\thakkar\_dhara@math.nagoya-u.ac.jp}
\date{}
\maketitle

\begin{abstract}
The $\pblmDef$ $(\GrS)$ asks, given two finite groups $G_1$ and $G_2$, whether there exists a surjective group homomorphism, or \emph{epimorphism}, from $G_1$ to $G_2$. When the input groups are given by their multiplication (Cayley) tables, the problem admits a quasipolynomial-time algorithm in general, but little is known about its complexity for structured classes of finite groups. In this paper, we study the computational complexity of $\GrS$ for several well-studied classes of finite groups.

Our main results are polynomial-time epimorphism tests for several classes of groups for which polynomial-time isomorphism testing was previously known:
\begin{itemize}
\item[$\bullet$] Groups with Abelian normal Hall subgroups with cyclic complement

\item[$\bullet$] Groups with (product of) elementary Abelian normal Hall subgroup with elementary Abelian complement.

\item[$\bullet$] Groups with some constraints on their Abelian chief factors.
\end{itemize}
\end{abstract}

\section{Introduction}
The study of computational problems in finite groups has a long history in both (computational) algebra and theoretical computer science. Among the most well-known is the Group Isomorphism problem ($\GpI$), which asks whether two given groups are isomorphic. 
From the perspective of worst-case guarantees,\footnote{In this paper we focus on algorithms with worst-case guarantees as a function of the order of the group, though we occasionally comment on heuristics that will likely improve performance in practice. Because of our focus, we can without loss of generality assume that the multiplication (Cayley) tables are given as input. We refer the reader to the Handbook of Computational Group Theory \cite{Holt+2005}, as well as, e.\,g., the works of Holt, Eick, O'Brien, Leedham-Green, and Hulpke 
\cite{EHHSurvey,BEick+2002,Eick+2002,BEick+1999,CH+2003,HulpkePerfect} for research into practical algorithms in more succinct input models.} this problem has been extensively studied for various classes of groups, including Abelian groups, groups with normal Hall subgroups, and groups without Abelian normal subgroups \cite{GrIBCQ,GrIBCGQ,GrIGenus2,GrISylowTower,GrIExtensionandCohomology,GrICountingToDecision,GItoGrI-GQ19,LeGall_classical_iso,GrIWilson,QST,GrIrosenbaumBidirectional,GrISun,GrIFOCS24}. Efficient algorithms in these settings often rely on exploiting the underlying algebraic structure, such as decompositions into simpler subgroups or the use of representation theory.

A related problem is the $\pblmDef$ ($\GrS$), which, given two finite groups $G_1$ and $G_2$, asks to decide whether there exists an \emph{epimorphism} (=surjective homomorphism)\footnote{In some concrete categories, such as the category of rings, the concepts of epimorphism and surjective homomorphism differ, but in the category of groups they are the same \cite{Linderholm}, so we do not dwell on the two definitions.} from $G_1$ to $G_2$. When $G_1$ and $G_2$ are given by their Cayley tables, one can easily design an $n^{\log_p n+O(1)}$-time algorithm for the $\GrS$ problem, where $|G_1|=n$ and $p$ is the smallest prime divisor of $n$. Unlike isomorphisms, epimorphisms allow the domain group to be larger and potentially more complex than the codomain, introducing new algorithmic and structural challenges.

The $\GrS$ problem has been studied previously, primarily in the setting of \emph{finitely presented  groups}, where the focus is often on decidability and hardness questions. In this context, Remeslennikov showed that the epimorphism problem is undecidable when both the source and target groups are non-Abelian nilpotent groups given by finite presentations \cite{Epimo5}. Subsequent work has identified classes of target groups for which epimorphism testing is decidable or computationally hard. Friedl and L\"oh proved decidability results when the target group is virtually cyclic or a direct product of an Abelian and a finite group \cite{Epimo3}, while Elder, Shen, and Weiss showed that the problem is $\NP$-complete for similar classes of targets \cite{EpiArmin}. Kuperberg and Samperton further established $\NP$-hardness when the target is a fixed finite non-Abelian simple group \cite{Epimo4}.

In contrast, comparatively little is known about the computational complexity of $\GrS$ for \emph{finite groups given by their Cayley tables}. While $\GpI$ and related problem have been extensively studied in this model, efficient algorithms for epimorphism testing remain largely unexplored. Understanding which classes of finite groups admit polynomial-time  algorithms for $\GrS$ is a natural and significant open direction in computational group theory and complexity theory. 

Motivated by this, in this paper, we investigate the computational complexity of $\GrS$ for several well-studied classes of groups. We focus on classes for which structural properties can be exploited algorithmically, and for which efficient isomorphism testing is known. In particular, we study $\GrS$ for groups with Abelian normal Hall subgroups and groups without Abelian composition factors. Our results build on and extend techniques from the literature on $\GpI$, while also introducing a new representation-theoretic problem and developing algorithmic frameworks specific to epimorphism testing.

Before coming to our contributions we mention a few points on the general complexity of \GrS, that also serve to motivate some of the classes of groups we consider. If $|G|=|H|$, then epimorphisms $G \to H$ and isomorphisms are the same thing, so $\GpI \leq_m^p \GrS$. In particular, this means that for any class of groups we cannot put \GrS for that class into $\mathsf{P}$ without doing the same for \GpI, hence we focus on classes of groups for which \GpI is already known to be in $\mathsf{P}$. It is clear that the problem is in $\mathsf{NP}$ (the epimorphism serves as a witness), and the generator-enumerator technique gives an $|G|^{\log |G| + O(1)}$-time algorithm for \GrS. Unlike isomorphism problems, we do not know whether \GrS is in $\mathsf{coAM}$ (though the quasi-polynomial-time upper bound nonetheless implies that it is not $\mathsf{NP}$-complete unless the Exponential Time Hypothesis is false). Just as \GpI reduces to \textsc{Graph Isomorphism}, \GrS reduces to \textsc{Graph Epimorphism}, but the latter is $\mathsf{NP}$-complete, even for target graphs of size $O(1)$ (epimorphisms to the triangle graph are the same as proper 3-colorings that use all three colors) \cite{graphsurjectionvikas1997, graphsurjection_GFMS}. It is a somewhat interesting question whether \GrS is in $\mathsf{P}$ when the target group has size $O(1)$.

\subsection{Our Contributions}
We study $\GrS$ for several well-studied classes of finite groups. For each such class, we design a deterministic polynomial-time algorithm for $\GrS$. We now summarize our main results.

To put our results in context, we recall some of the significant progress on algorithms with worst-case guarantees for \GpI, and use this opportunity to introduce some notation we will use throughout the paper. We begin with some notation: for two group classes ${\rm X}$ and ${\rm Y}$, $\Hall{\rm X}{\rm Y}$ is the class of groups with a normal subgroup ${N}$ from ${\rm X}$ and a complement $H$ from ${\rm Y}$, such that $|N|$ and $|H|$ are coprime. 
Let $\Ab$ be the class of Abelian groups and let $\Abp$ be the class of Abelian $p$-groups, where $p$ is a prime. Let $\ElemAb$ be the class of elementary Abelian groups, and let $\Cyc$ be the class of cyclic groups. 

Over the last two decades, two of the most general classes of groups for which polynomial-time algorithms for \GpI have been developed fall into the following lines of work, and we make progress on solving \GrS along these lines:
\begin{itemize}
\item \textbf{Coprime extensions.} Le Gall \cite{LeGall_classical_iso} solves \GpI for groups in $\Hall{\Ab}{\Cyc}$. Qiao, Sarma, and Tang \cite{QST} extended this to $\Hall{\Ab}{\ElemAb}$. This was extended to groups with Abelian Sylow towers by Babai \& Qiao \cite{BabaiQiao}, and to tame Sylow towers in Grochow \& Qiao \cite{GQtame}. We refer the reader to those papers for definitions as we do not yet handle \GrS for such groups here.

We do solve \GrS for groups in $\Hall{\Ab}{\Cyc}$ (cf. Le Gall \emph{ibid.}) and $\Hall{\prod \ElemAb}{ \ElemAb}$, and discuss challenges to extending our results to $\Hall{\Ab}{\ElemAb}$ (cf. Qiao--Sarma--Tang, \emph{ibid.}). 

\item \textbf{Fitting-free groups.} Isomorphism of groups without Abelian normal subgroups, a.k.a. semisimple or Fitting-free, was solved in \cite{GrIBCQ} (building on \cite{GrIBCGQ}). This was extended to more general classes of groups in \cite{GQcoho}.

Although we do not yet solve \GrS for arbitrary Fitting-free groups, we handle two classes of groups that are more restrictive, yet headed in this direction. Namely, we solve \GrS (1) for groups with no Abelian composition factors, 
and (2) for pairs $(G,H)$ where $H$ is Fitting-free and the Abelian chief factors of $G$ have limited multiplicity. 
We discuss some challenges in extending this to all Fitting-free groups in Sec.~\ref{sec:conclusion}.

\end{itemize}

For all the aforementioned results except the last, the classes of groups involved are closed under homomorphic images, and membership in the class is recognizable in polynomial time. If $\mathcal{C}$ is any class of finite groups with these properties, then solving \GrS for pairs $(G,H)$ with $G,H \in \mathcal{C}$ is equivalent to solving \GrS for pairs $(G,H)$ where $G \in \mathcal{C}$ and $H$ is arbitrary.



\subparagraph*{Efficient epimorphism testing of $\Hall{\Ab}{\Cyc}$.} We begin by considering the class $\Hall{\Ab}{\Cyc}$, consisting of groups that are extensions of an Abelian group by a cyclic group. This class serves as a first step beyond Abelian groups for $\GrS$. Polynomial-time algorithms for $\GpI$ are known for this class due to Le Gall \cite{LeGall_classical_iso}. \Cref{thm-AbelianCyclic} shows that $\GrS$ can be solved efficiently for groups in $\Hall{\Ab}{\Cyc}$. 
Although we leverage some of the same algebraic structure used by Le Gall, we also use additional results from representation theory and combinatorial techniques (maximum matching in bipartite graphs).


\begin{theorem}\label{thm-AbelianCyclic}
There is a polynomial-time algorithm for $\GrS$, restricted to pairs of groups $(G_1,G_2)$ such that $G_1$ is in $\Hall{\Ab}{\Cyc}$. If there is an epimorphism from $G_1$ to $G_2$, our algorithms computes one such. 
\end{theorem}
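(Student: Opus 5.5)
Write $G_1 = A \rtimes \langle x\rangle$ with $A$ Abelian and $\gcd(|A|,m)=1$, where $m=|x|$. The plan is to reduce epimorphism testing to a combinatorial matching problem on the eigen-decomposition of the action of $x$ on $A$.

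\textbf{Step 1: the shape of the target.} Let $\varphi\colon G_1\to G_2$ be an epimorphism. Then $\varphi(A)$ is a normal Abelian Hall subgroup of $G_2$, and $\varphi(\langle x\rangle)$ is a cyclic complement of coprime order. So $G_2$ must itself lie in $\Hall{\Ab}{\Cyc}$, with $N_2 = \varphi(A)$. This $N_2$ is the unique normal Hall subgroup of order $|\varphi(A)|$, hence it is determined by $G_2$. We first test in polynomial time whether $G_2 = N_2\rtimes\langle y\rangle$ of this form; the Hall $\pi$-subgroup for $\pi$ ranging over prime subsets of $|G_2|$ can be found by taking the set of elements whose order is a $\pi$-number. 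By Schur--Zassenhaus all complements are conjugate, so we may assume $\varphi(x)\in \langle y\rangle$. Surjectivity forces $\varphi(x)$ to be a generator $y^k$, and we may take $k=1$ after relabeling the generator, since we will range over all generators anyway. Thus $\varphi$ consists of a surjection $\alpha\colon A\to N_2$ together with the condition $|y|\mid m$, satisfying the equivariance
\[
\alpha(x a x^{-1}) = y^k\alpha(a) y^{-k}.
\]
There are at most $m\le |G_2|$ choices of $k$ coprime to $|y|$, so we can loop over all of them. The problem then becomes: given the $\mathbb{Z}\langle x\rangle$-module $A$ and the $\mathbb{Z}\langle x\rangle$-module $N_2$ (with $x$ acting via $y^k$), decide whether there is a surjective module homomorphism $A\to N_2$.

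\textbf{Step 2: decomposition.} Split $A$ and $N_2$ into Sylow parts $A_p$ and $(N_2)_p$. The question decomposes over primes $p$, and each $p\nmid m$. By coprimality (Maschke, Fitting), each $A_p$ is a direct sum of indecomposable $\mathbb{Z}_p\langle x\rangle$-modules. Each of these is homocyclic, of the form $(\mathbb{Z}/p^e)\otimes W$ with $W$ an irreducible $\mathbb{F}_p\langle x\rangle$-module. In other words, $A_p \cong \bigoplus_{W,e} (\mathbb{Z}/p^e\otimes \tilde W)^{c_{W,e}}$, where $\tilde W$ is the unramified lift. This holds because $\mathbb{Z}_p\langle x\rangle$ is a product of unramified extension rings $\mathcal{O}_W$, which are local PIDs. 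Over each such DVR factor $\mathcal{O}_W$, a finite module is a direct sum of cyclic modules $\mathcal{O}_W/\pi^{e}$. A surjection between two finite modules over a DVR exists iff the invariant partitions dominate appropriately: $\lambda(A_W)\supseteq\lambda(N_W)$ as multisets after matching, i.e. the $i$-th largest part of $N_W$ is at most the $i$-th largest part of $A_W$, and $N_W$ has no more parts than $A_W$. These partitions can be computed in polynomial time from the Cayley table, for example by computing the $p^j$-torsion layers as $\mathbb{F}_p\langle x\rangle$-modules and decomposing them with the MeatAxe or a polynomial-time module decomposition over a small field.

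\textbf{Step 3: the matching.} The remaining subtlety is that the identification of isotypic components of $N_2$ with irreducible types $W$ of $\langle x\rangle$ depends on the choice of generator $y^k$. Changing $k$ permutes the characters by a Galois-like twist. Rather than enumerating $k$ naively, I will phrase the question for each fixed $k$ as above, or more robustly as a bipartite matching: an edge joins an isotypic piece of $N_2$ to an isotypic piece of $A$ when the characters correspond under $x\mapsto y^k$ and the partitions dominate, and we seek a matching saturating $N_2$. Since the pieces are matched type-by-type, this mostly collapses to per-type checks, but the matching formulation handles multiple primes consistently. Finally, we construct $\alpha$ explicitly from chosen bases of cyclic summands and output $\varphi$.

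\textbf{Main obstacle.} I expect the hardest step to be the correct handling of the complement generator choice together with the compatible identification of irreducible constituents across $A$ and $N_2$ via characters of the cyclic group. One also needs to check that $\ker$ conditions on $x$, namely the condition that the order of $y$ divides $m$ and that the kernel of $x$'s action fits, are consistent. I will first try to show that surjectivity forces a type-preserving correspondence, making the per-type dominance check sufficient.
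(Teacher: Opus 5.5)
Your proposal is correct. It shares the paper's skeleton: a Taunt-type reduction to an equivariant pair, a loop over the identifications of the cyclic complements (your $k$, the paper's $\bar\phi$), and a comparison of summands. The module-theoretic core, however, is handled differently.

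The paper fixes $K=\ker\phi$ and passes to $N_1/L\rtimes H_1/K$ with $L=\langle v^{\tau(k)}-v\rangle$. It calls Le Gall's isomorphism test when $|N_1/L|=|N_2|$. Otherwise it computes explicit indecomposable decompositions (Th\'evenaz's cyclicity result plus Cioc\v{a}nea-Teodorescu's algorithm) and joins $V_i$ to $U_j$ when two conditions hold: their reductions mod $p$ are conjugate (Chistov--Ivanyos--Karpinski), and $\mathrm{exponent}(U_j)\le\mathrm{exponent}(V_i)$. It then looks for a matching saturating the $U_j$'s.

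You instead pull the action on $N_2$ back along $x\mapsto y^k$, so $L$ is killed automatically and the equal-order case needs no separate treatment. You then use the splitting $\mathbb{Z}_p\langle x\rangle\cong\prod_W\mathcal{O}_W$ into unramified DVRs. This reduces the question to Young-diagram containment of invariant partitions, for each prime and each irreducible type $W$, read off from the multiplicities of $W$ in the layers $p^jA/p^{j+1}A$. The paper's saturating matching exists exactly when your containment holds, since Hall's condition for nested neighborhoods is sorted domination.

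What your route buys:
\begin{itemize}
\item an explicit numerical criterion;
\item a transparent reason why the matching is the right test;
\item no black-box algorithms;
\item an explicit determination of the target decomposition: for any fixed decomposition of $G_1$, $\varphi(A)$ must be the normal Hall $\pi(A)$-subgroup of $G_2$, a point the paper's selection of pairs by order leaves implicit.
\end{itemize}
What the paper's route buys is that its black boxes directly return the module isomorphisms $\rho_{ij}$ used to write down the epimorphism. In your version, outputting $\varphi$ still requires an actual cyclic decomposition of each $A_W$ and $N_W$. That is routine (split off the submodule generated by an element of maximal order and recurse), but you only sketch it.

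The obstacles you flag are already handled by your setup. For fixed $k$, both sides are $\mathbb{Z}[t]/(t^m-1)$-modules, and homomorphisms respect the idempotent (isotypic) decomposition. No kernel condition beyond $|y|\mid m$ is needed. Also, the number of values of $k$ is at most $|y|\le|G_2|$, not $m$.
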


\subparagraph*{Efficient epimorphism testing of $\Hall{\prod \ElemAb}{\ElemAb}$.}  Next, we study $\GrS$, for groups in the class $\Hall{\prod \ElemAb}{\ElemAb}$, which consists of groups with a normal Hall subgroup that is a direct product of elementary Abelian groups, complemented by another elementary Abelian group. To develop an algorithm for this class, we first consider the subclass, $\Hall{\ElemAb}{\ElemAb}$.

To solve $\GrS$ for groups in $\Hall{\ElemAb}{\ElemAb}$, we reduce it to a problem in representation theory of finite groups. We now describe the specific formulation of this problem that captures the $\GrS$.

A \emph{representation} of a group $G$ is a homomorphism from $G$ to a general linear group. When the group $G$ is a finite elementary Abelian normal subgroup of a larger group, the condition of epimorphism testing of large groups with respect to their component groups naturally translates into questions about representations (\Cref{thm-Taunt-Surjection}). This leads to an interesting algorithmic problem in representation theory that is of independent interest. We define this problem below after introducing the following definition. 

Let $V$ and $W$ be two finite-dimensional vector spaces over a field $\mathbb{F}$ of dimension $d$ and $d'$, respectively. Let $\tau:G\to \GL(d, \field)$ and $\gamma:G \to \GL(d',\field)$ be two representations (see \Cref{sec-prelims}). We say that $\tau$ and $\gamma$ are \emph{$\Epi$} if there exists a surjective linear transformation $\rho:V \to W$ such that $\rho \tau(g) = \gamma(g) \rho$ for every $g \in G$.     

\begin{problem}($\RepEpi$)\label{prob-Epi}
Given two representations, $\tau: H_1 \to \GL(m_1, \Z_p)=\aut(\Z_p^{m_1})$ and $\gamma: H_2 \to \GL(m_2, \Z_p)=\aut(\Z_p^{m_2})$, such that $H_i$ are elementary Abelian $q$-groups, and $p,q$ are distinct primes, determine whether there exists a surjective homomorphism $\phi: H_1 \to H_2$ such that the induced representation $\tau$ and $\gamma \circ \phi$ are epivalent, i.\,e., there exists a surjective linear map $\rho: \Z_p^{m_1} \to \Z_p^{m_2}$ such that $\rho \circ\tau(h)=\gamma(\phi(h))\circ \rho$, $\forall h \in H_1$.
\end{problem}

The following theorem shows that $\RepEpi$ captures the essential difficulty of epimorphism testing for groups in $\Hall{\ElemAb}{\ElemAb}$.

\begin{theorem}\label{thm-EEtoSInduced}
For groups $G_1, G_2 \in \Hall{\ElemAb}{\ElemAb}$, $\GrS$ is polynomial-time many-one equivalent to $\RepEpi$.
\end{theorem}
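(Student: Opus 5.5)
We prove the two reductions separately. The tool throughout is the Taunt-type correspondence (\Cref{thm-Taunt-Surjection}) for coprime extensions $G = N \rtimes_\tau H$. There $N$ is a normal Hall subgroup and $H$ a complement, so by Schur--Zassenhaus all complements are conjugate.

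\textbf{From $\GrS$ to $\RepEpi$.} Given $G_1, G_2 \in \Hall{\ElemAb}{\ElemAb}$, we first compute in polynomial time the decompositions $G_i = N_i \rtimes H_i$. We find $N_i$ as the unique normal Hall subgroup, and find a complement by a coprime-action computation. We may assume $N_i \cong \Z_p^{m_i}$ and $H_i \cong \Z_q^{k_i}$ with the same primes $p \neq q$ on both sides; otherwise we decide directly, since an epimorphism must map the $p$-part onto the $p$-part.

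The key structural claim is as follows. Any epimorphism $\Phi: G_1 \to G_2$ maps $N_1$ onto $N_2$, because $\Phi(N_1)$ is a normal Hall $p$-subgroup of $G_2$. It also maps $H_1$ onto a complement of $N_2$, so after composing with an inner automorphism of $G_2$ we may assume $\Phi(H_1) = H_2$. Then $\rho := \Phi|_{N_1}$ is a surjective linear map and $\phi := \Phi|_{H_1}$ is a surjective homomorphism, and conjugation-compatibility gives $\rho\tau(h) = \gamma(\phi(h))\rho$. Conversely, such a pair $(\rho,\phi)$ defines $\Phi(nh) = \rho(n)\phi(h)$. This map is a homomorphism precisely because of the intertwining relation, and it is surjective. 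Hence $G_1 \twoheadrightarrow G_2$ if and only if $(\tau,\gamma)$ is a yes-instance of $\RepEpi$, and the reduction outputs $(\tau,\gamma)$ as matrices.

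The hardest point in this direction is the degenerate trivial cases. For example, $N_2$ may be trivial or $H_2$ may be trivial, in which case the primes do not match up or $G_2$ is a $q$-group. Then $\Phi$ may send $N_1$ to $1$, and we must test directly whether $G_1/N_1 \cong H_1$ surjects onto $G_2$. That is easy for elementary Abelian groups: compare ranks, or map to a trivial representation instance.

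\textbf{From $\RepEpi$ to $\GrS$.} Given $(\tau,\gamma)$, we form $G_1 = \Z_p^{m_1} \rtimes_\tau H_1$ and $G_2 = \Z_p^{m_2} \rtimes_\gamma H_2$. Their Cayley tables have size $p^{m_i}q^{k_i}$, so we must check that this is polynomial in the input. Here we interpret the input to $\RepEpi$ as given by the group tables of $H_i$ together with the matrices, or else we assume the convention the paper uses, under which input size is polynomial in the group orders. By the same equivalence as above, $G_1 \twoheadrightarrow G_2$ if and only if the $\RepEpi$ instance is positive. In the degenerate cases $m_2 = 0$ or $H_2$ trivial, the equivalence still holds after a small adjustment: if $m_2 = 0$, we have the condition $\mathrm{rank}\,H_1 \ge \mathrm{rank}\,H_2$ both ways, and we handle this case separately by outputting a fixed yes/no instance.

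I expect the main obstacle to be the conjugation normalization $\Phi(H_1) = H_2$. In addition, the input-size bookkeeping needs care, so that both maps run in polynomial time.
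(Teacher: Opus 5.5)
Your proposal is correct and is essentially the paper's proof. Both directions fix compatible decompositions $G_i = N_i \rtimes H_i$ and appeal to the Taunt-type correspondence of \Cref{thm-Taunt-Surjection}; your normalization $\Phi(H_1)=H_2$ via Schur--Zassenhaus conjugacy is just a direct proof of the case of that theorem you need. The one imprecision is calling $N_i$ ``the unique'' normal Hall subgroup: for direct products there are two, and you must pick the one whose prime matches, as the paper does by listing all of them. Your point that the reverse reduction needs an input convention under which $p^{m_i}|H_i|$ is polynomial in the input size is legitimate, and the paper leaves it implicit.
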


To solve $\RepEpi$ in the above setting, we further reduce it to the well-known \textsc{Code Isomorphism} problem (\Cref{prob-codeIso}). 
This is similar to the technique used to solve \GpI for this class of groups in polynomial time \cite{QST}, but in our case reducing to \textsc{Code Isomorphism} requires additional work, since we start with an epimorphism problem and are reducing to an isomorphism problem (whereas in Qiao--Sarma--Tang \cite{QST} they were reducing from one isomorphism problem to another).

\begin{theorem}\label{thm-RepEpiTOCode}
Using the notation of \Cref{prob-Epi}, when $H_i$ are elementary Abelian of rank $d$, and the representations have $m = \max\{m_1,m_2\}$, the problem $\RepEpi$ reduces to \textsc{CodeIso} for $d$-dimensional codes in $\F^m$.
\end{theorem}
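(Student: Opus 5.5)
The plan is to reduce both representations to combinatorial data in the dual group and then read that data off as a code. Since $p\neq q$ and the $H_i$ are $q$-groups, Maschke's theorem makes $\tau$ and $\gamma$ semisimple over $\Z_p$. I would pass to an extension $\mathbb{K}=\mathbb{F}_{p^k}$ containing the $q$-th roots of unity; here $k$ is the order of $p$ modulo $q$, so $k<q\le |H_i|$. Over $\mathbb{K}$ each representation splits into one-dimensional characters $\chi\colon H_i\to \mathbb{K}^*$, and these form the dual group $\widehat{H_i}\cong \Z_q^{d}$.

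Hence $\tau$ is determined, up to equivalence, by a multiplicity function $f_\tau\colon \Z_q^d\to\mathbb{N}$ with $\sum f_\tau = m_1$, and likewise $\gamma$ by $f_\gamma$. Over $\Z_p$ itself the irreducibles correspond to Galois orbits $\{\chi,\chi^p,\chi^{p^2},\dots\}$, i.e.\ orbits of $\gen{p}\le\Z_q^*$ acting by scalars on $\Z_q^d$. So $f_\tau$ and $f_\gamma$ are constant on these orbits. All of this is computable in polynomial time: diagonalize the commuting matrices $\tau(h)$ for a basis of $H_1$ over $\mathbb{K}$, or, equivalently, compute the character of $\tau$ and invert the Fourier transform on $H_1$.

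Next I translate epivalence. Because both sides are semisimple, a surjective intertwiner $\rho$ exists iff $\gamma\circ\phi$ is isomorphic to a quotient, equivalently a direct summand, of $\tau$. That holds iff the multiplicity of each irreducible in $\gamma\circ\phi$ is at most its multiplicity in $\tau$. Composing with $\phi$ pulls characters back along the dual map $\phi^*\colon\widehat{H_2}\to\widehat{H_1}$, $\chi\mapsto\chi\circ\phi$. Since both groups have rank $d$, $\phi$ is an isomorphism and $\phi^*=A\in \GL(d,q)$. Thus $\RepEpi$ becomes: does there exist $A\in\GL(d,q)$ with $f_\gamma(A^{-1}x)\le f_\tau(x)$ for all $x\in\Z_q^d$?

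To build codes, I would list the characters of $\tau$ with multiplicity as the columns of a $d\times m_1$ matrix $M_\tau$ over $\Z_q$, and similarly $M_\gamma$. Here $m_1,m_2\le m$, and I pad the shorter matrix with zero columns to length $m$. The row spaces are $d$-dimensional codes in $\field^m$. Two such codes are equivalent under permutations (and scalings by $\gen{p}$, which are harmless since multiplicities are orbit-invariant) iff there is an $A$ with $A\cdot(\text{columns of }M_\gamma)=\text{columns of }M_\tau$ as multisets. When $m_1=m_2$ the inequality above forces equality, and this is exactly \textsc{Code Isomorphism}.

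The main obstacle is the case $m_1>m_2$, where we need containment of multisets rather than equality. This is where an epimorphism problem has to be turned into an isomorphism problem. The first attempt would be a complementation trick: since $\Z_q^d$ is finite and $f_\tau,f_\gamma$ are bounded by $m$, put $N=\max f_\tau$. Encode $\gamma$ by columns with multiplicity $N-f_\gamma(x)$ and $\tau$ by columns with multiplicity $N-f_\tau(x)$ for every $x\in\Z_q^d$; this costs length $N q^d\le m|H|$, which is polynomial. The inequality $f_\gamma\circ A^{-1}\le f_\tau$ then becomes the statement that the complemented $\tau$-multiset sits inside the complemented $\gamma$-multiset. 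That is still containment, so a further gadget is needed. My concrete plan is to tag each coordinate by appending extra rows or marker coordinates that fix the "slack" $f_\tau-f_\gamma\circ A^{-1}$ as a free block. Specifically, I would append to the $\gamma$ side $m_1-m_2$ columns that are forced to be matched to whatever is left over, for instance by using a direct-sum code $C_\gamma\oplus C'$ in which $C'$ can absorb arbitrary columns.

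If no such gadget works in general, I would fall back on using the structure of the rank-$d$ problem to make the choice of the submultiset canonical. This case analysis is the step I expect to require the most care.
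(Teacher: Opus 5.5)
Your treatment of the case $m_1=m_2$ is correct and is essentially the paper's argument. You list the constituents as columns of $d\times m_i$ matrices over $\Z_q$, and equal total multiplicity turns the pointwise inequality $f_\gamma(A^{-1}x)\le f_\tau(x)$ into multiset equality, which is exactly permutational \textsc{Code Isomorphism}. Listing all characters over $\mathbb{K}$, so that the multisets are $\gen{p}$-closed, is a nice touch: it avoids the fact that $f_v\cong f_{cv}$ for $c\in\gen{p}$, which would otherwise force consistent choices of orbit representatives.

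The genuine gap is the case $m_1>m_2$, which you leave unresolved; it is the whole content of turning an epimorphism question into an isomorphism question. Write $X,Y$ for the column multisets of $M_\tau,M_\gamma$. Any padding $Z$ of the $\gamma$ side would have to satisfy $A(Y\cup Z)=X$, i.e.\ $Z=A^{-1}(X\setminus AY)$, and this depends on the unknown $A$. \textsc{Code Isomorphism} has no wildcard columns that could absorb it, so none of your devices works as stated:
\begin{itemize}
\item Padding with zero columns only detects the special case in which all the slack consists of trivial characters.
\item The complementation trick merely reverses the direction of the containment, as you note.
\item The code $C_\gamma\oplus C'$ has dimension larger than $d$ once $C'\neq 0$, so it cannot be equivalent to the $d$-dimensional code of $\tau$.
\end{itemize}
The fallback of making the submultiset canonical is not an argument.

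The missing idea is to trim the $\tau$ side instead of padding the $\gamma$ side. The paper writes the condition as $M_1P=\phi^TM_2$ with $P$ an $m_1\times m_2$ injection matrix. It factors $P$ as a column selection $S=\begin{bmatrix} I_{m_2}\\ \mathbf{0}\end{bmatrix}$ followed by an $m_2\times m_2$ permutation $Q$. After row reduction it arrives at $\tilde{M_1}\tilde{Q}=\bar{\phi^T}[I_{l_2}\mid M_2']$, a genuine \textsc{Code Isomorphism} instance between $d\times m_2$ matrices. Once the kept columns are chosen they are known data, and nothing depends on $A$.

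If you adopt this, be aware that a general injection matrix has the form $Q_1SQ$ with $Q_1$ an $m_1\times m_1$ permutation. So the selection cannot be fixed in advance, as the paper's factorization $P=SQ$ suggests; you must range over which $m_2$ of the $m_1$ columns to keep. In the group setting $\Z_p^{m_1}=N_1$, so there are at most $\binom{m_1}{m_2}\le 2^{m_1}\le |N_1|$ choices. This gives polynomially many \textsc{Code Isomorphism} queries on codes of length $m_2\le m$, a Turing rather than many-one reduction. With that step in place your outline goes through.
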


The above theorem enables us to derive the following result.

\begin{corollary} \label{cor-ElemElem}
There is a polynomial-time algorithm for $\GrS$, restricted to pairs of groups $(G_1,G_2)$ such that $G_1 \in \Hall{ \ElemAb}{\ElemAb}$. 

\end{corollary}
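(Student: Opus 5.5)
We would obtain \Cref{cor-ElemElem} by chaining the three reductions already stated, together with a preprocessing step that handles an arbitrary target $G_2$.

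\textbf{Step 1: reduce to a target in the class.} We first check in polynomial time that $G_1 \in \Hall{\ElemAb}{\ElemAb}$, i.e., that $G_1 = N_1 \rtimes H_1$ with $N_1$ an elementary Abelian $p$-group that is a normal Hall subgroup and $H_1$ an elementary Abelian $q$-group. The class is closed under homomorphic images. Indeed, if $\phi$ is an epimorphism, then $\phi(N_1)$ is a normal elementary Abelian $p$-subgroup of $G_2$, $\phi(H_1)$ is an elementary Abelian $q$-group, and these have coprime orders whose product is $|G_2|$. So we can verify directly in polynomial time whether $G_2$ has this shape: its order must be $p^a q^b$, and its Sylow $p$-subgroup must be normal and elementary Abelian with an elementary Abelian complement. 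If $G_2$ fails this test, we reject. The degenerate cases where $G_1$ or $G_2$ is a single $p$-group or $q$-group reduce to comparing ranks of elementary Abelian groups, so we dispose of them separately.

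\textbf{Step 2: pass to \RepEpi and then to code isomorphism.} By \Cref{thm-EEtoSInduced}, which relies on \Cref{thm-Taunt-Surjection}, we reduce in polynomial time to an instance of \RepEpi. This instance consists of the conjugation actions $\tau: H_1 \to \GL(m_1,\Z_p)$ and $\gamma: H_2 \to \GL(m_2,\Z_p)$. Here $m_i=\log_p|N_i|$, so the size of the instance is polynomial in $|G_1|$. We then apply \Cref{thm-RepEpiTOCode}. One point needs care: that theorem assumes $H_1$ and $H_2$ have the same rank $d$, whereas an epimorphism $H_1\to H_2$ may have a kernel. Following the structure of the reduction, we would enumerate the possible kernels of $\phi$ restricted to $H_1$ only up to what is needed. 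A convenient choice is the kernel $K$ of $\tau$'s ``unused'' part. More simply, we observe that $\ker\phi$ must act on $\rho$'s image trivially, so we can enumerate subgroups $K \le H_1$ of the correct index $|H_1|/|H_2|$.

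\textbf{Step 3: the main obstacle, and the plan for it.} I expect this to be the main obstacle. Enumerating the subgroups $K$ in Step 2 is not polynomial in general, since an elementary Abelian group of rank $d$ has about $q^{d^2/4}$ subgroups. My plan is to use Maschke's theorem: since $p \ne q$, both $\tau$ and $\gamma$ decompose into $\F_p$-irreducible constituents, each with a kernel in $H_i$. The kernel of $\phi$ must be contained in the intersection of the kernels of the constituents of $\tau$ that survive in the image of $\rho$. I would therefore choose which isotypic components of $\tau$ map onto those of $\gamma$. I would try to absorb this choice into the code-isomorphism instance, and I expect \Cref{thm-RepEpiTOCode} as stated already performs it. If so, the remaining step is to take the code-isomorphism instance of \Cref{thm-RepEpiTOCode}: $d$-dimensional codes in $\F^m$ with $m \le \log_p |G_1|$ and alphabet governed by $q^d \le |G_1|$. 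We then invoke the known algorithm for \textsc{Code Isomorphism} in the regime used by Qiao--Sarma--Tang \cite{QST}, which runs in time $2^{O(d)}\poly(m)$ and is hence polynomial in $|G_1|$ because $q^d \le |G_1|$. Finally, we trace back through the reductions to output an explicit epimorphism.
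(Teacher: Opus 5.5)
Your final chain is exactly the paper's proof: check that $G_2$ lies in the quotient-closed class, reduce to the representation problem by \Cref{thm-EEtoSInduced}, reduce that to \textsc{Code Isomorphism} by \Cref{thm-RepEpiTOCode}, and run Babai's singly exponential algorithm.

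The rank mismatch you single out as the main obstacle is not one. The proof of \Cref{thm-RepEpiTOCode} is written for surjections $\phi\colon \Z_q^{l_1}\to\Z_q^{l_2}$ with $l_1\ge l_2$. There the matrix $\psi$ compresses the selected columns of $M_1$ to an $l_2\times m_2$ matrix $\tilde{M_1}$, so candidate kernels are never enumerated. Your own alternative, choosing which constituents of $\tau$ survive, is also cheap: there are at most $\prod_i(k_i+1)\le 2^{m_1}\le |N_1|$ such choices.

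One correction concerns the running time. The bound you quote, $2^{O(d)}$ times a polynomial in the length, is not what the cited result gives: Babai's algorithm is singly exponential in the code \emph{length}. It is still polynomial here, because the length is at most $m_1=\log_p|N_1|$ and hence $2^{m_1}\le|N_1|$.
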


We then extend this result to obtain an efficient algorithm for testing epimorphism between groups in $\Hall{\prod \ElemAb}{\ElemAb}$.

\begin{theorem}\label{thm-ProdElem}
There is a polynomial-time algorithm for $\GrS$ restricted to pairs of groups $(G_1,G_2)$ such that $G_1 \in \Hall{\prod \ElemAb}{\ElemAb}$.
\end{theorem}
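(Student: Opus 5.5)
We will reduce to the case of \Cref{cor-ElemElem} by decomposing the normal Hall subgroup into its Sylow subgroups. Write $G_1 = N \rtimes H$ with $N = P_1 \times \cdots \times P_k$, where $P_i$ is an elementary Abelian $p_i$-group for distinct primes $p_i$, $H$ is elementary Abelian of exponent $q$, and $\gcd(|N|,|H|)=1$. Each $P_i$ is characteristic in $N$ and hence normal in $G_1$. The class is closed under quotients, and membership (including finding $N$, a complement $H$, and the $P_i$) can be done in polynomial time. So, as remarked after the list of results, we may first check in polynomial time that $G_2 \cong N' \rtimes H'$ lies in the same class. Otherwise we reject, since any homomorphic image of $G_1$ lies in the class.

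The key structural step is to show that an epimorphism $\phi: G_1 \to G_2$ may be assumed to send $N$ onto $N'$ and $H$ onto a complement of $N'$. This holds because images of $p$-elements are $p$-elements and orders are coprime. Moreover, $\phi$ maps $P_i$ onto the Sylow $p_i$-subgroup $P_i'$ of $N'$, where $P_i'$ is trivial if $p_i \nmid |N'|$. By Schur--Zassenhaus all complements are conjugate, so after composing with an inner automorphism we may take $\phi(H)=H'$. Hence $\phi$ is determined by a surjection $\psi: H\to H'$ together with surjective linear maps $\rho_i: P_i \to P_i'$ satisfying $\rho_i \tau_i(h) = \gamma_i(\psi(h))\rho_i$, where $\tau_i,\gamma_i$ are the conjugation representations of $H, H'$ on $P_i, P_i'$. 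Conversely, any such data $(\psi, (\rho_i)_i)$ assembles into an epimorphism of the semidirect products. The reason is that $\rho=\prod_i \rho_i$ is then an $H$-equivariant surjection $N\to N'$, exactly as in the proof of \Cref{thm-Taunt-Surjection}.

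This is a simultaneous version of $\RepEpi$: we need a single $\psi$ that works for all $i$ at once. I would handle it by combining the representations into one. Directly summing over different primes is not a representation over one field, so instead I would extend the reduction of \Cref{thm-RepEpiTOCode}. In that reduction, the representation of $H$ on each $P_i$ decomposes (Maschke, since $q\ne p_i$) into isotypic components indexed by characters of $H$ over $\overline{\field_{p_i}}$. The epivalence condition then becomes a coding-theoretic condition on the multiset of characters, taken up to Galois orbits, with multiplicities allowed to decrease.

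After this step every constraint lives on $\widehat{H}\cong \field_q^d$, the dual group of $H$. Each prime $p_i$ contributes a labeled multiset of lines or characters in $\field_q^d$, and we ask for an injective linear map $\widehat{H'}\to\widehat{H}$ (dual to $\psi$) that carries the labeled multisets for $H'$ into those for $H$ with multiplicities dominated. Using labels/colors $(i,\text{multiplicity class})$, this becomes a single colored code-isomorphism-type question in $\field_q^d$. By the same reduction as \Cref{thm-RepEpiTOCode}, it is solvable in polynomial time because the rank $d$ of $H$ is logarithmic. Equivalently, one can enumerate the $|H|^{O(\log|H|)}$… but that is not polynomial, so we rely on the code reduction, as in Qiao--Sarma--Tang, and on the fact that the number of coordinates $\sum_i m_i \le \log|N|$.

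I expect the main obstacle to be this simultaneity. It requires checking that the reduction of \Cref{thm-RepEpiTOCode} tolerates colored coordinates coming from different primes and fields. It also requires ensuring that the "surjection with decreasing multiplicity" condition is handled per prime rather than globally. I would first attempt to treat all primes' characters as coordinates of one code over $\field_q$, with colors enforced by padding gadgets.
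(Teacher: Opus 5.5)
Your reduction is correct and matches the paper's set-up. By \Cref{thm-Taunt-Surjection} with $N=\prod_i P_i$, you need one surjection $\psi\colon H\to H'$ together with $H$-equivariant surjections $\rho_i\colon P_i\to P_i'$, for all primes at once. (Take $N'$ to be the set of $q'$-elements of $G_2$, since $G_2$ may lie in the class via several decompositions.)

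The gap is the final step, which you leave as a plan. Nothing in the proposal shows that this simultaneous, prime-labelled problem is solvable in polynomial time, and parts of the sketch would fail:
\begin{itemize}
\item If the colours record multiplicity classes, a colour-preserving map forces equal multiplicities, which is the isomorphism condition. For an epimorphism, multiplicities may drop and constituents may vanish. So multiplicity must be encoded by repeated columns, and the column correspondence must be an injection.
\item The ``padding gadgets'' are unspecified, and the natural ones (repeated columns, extra coordinates) lengthen the code. The length $n$, not the rank of $H$, is exactly what code-equivalence algorithms are exponential in.
\end{itemize}

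Your worry about simultaneity is justified. The paper stacks the per-prime matrices block-diagonally and appeals to \Cref{thm-RepEpiTOCode}. Taken literally, a code isomorphism $TCP=D$ between block-diagonal matrices may change basis independently on each block, so it only tests the primes separately. For example, take $q\mid p_i-1$ and all actions nontrivial. Then $G_1=(\mathbb{Z}_{p_1}\rtimes\mathbb{Z}_q)\times(\mathbb{Z}_{p_2}\rtimes\mathbb{Z}_q)$ passes each prime separately against $G_2=(\mathbb{Z}_{p_1}\times\mathbb{Z}_{p_2})\rtimes\mathbb{Z}_q$. Yet $G_1$ has no normal subgroup of order $q$, which is what the kernel would have to be, so no epimorphism exists. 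Your labelled concatenation is therefore the right formulation, but it has to be carried out.

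Here is one way to close the gap.
\begin{enumerate}
\item For each $p_i$, let every nontrivial constituent $f_v$ of $\tau_i$ (with multiplicity) contribute its whole orbit $\{p_i^jv\}$ as columns; there are exactly $\dim_{\mathbb{F}_{p_i}}f_v$ of them. Let every trivial constituent contribute one zero column, and label all these columns $i$.
\item Do the same for $\gamma_i$, using the same primitive $q$-th root of unity over $\mathbb{F}_{p_i}$ on both sides. The length is then $n=\sum_i\dim_{\mathbb{F}_{p_i}}P_i\le\log_2|N|$.
\item A linear $\psi^*$ maps $p_i$-orbits to $p_i$-orbits. So \Cref{lem:irrep}, applied prime by prime, shows that an epimorphism exists iff some injective linear $\psi^*\colon\widehat{H'}\to\widehat{H}$ maps the labelled column multiset of $G_2$ into that of $G_1$, label by label.
\item Enumerate the at most $2^n$ candidate image sets with the right label counts. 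For each, reduce to full row rank as in the proof of \Cref{thm-RepEpiTOCode}; an injective extension to all of $\widehat{H'}$ exists iff $|H'|\le|H|$. Then test label-preserving code equivalence.
\item This test still runs in time $2^{O(n)}$. Fix an information set of one code and guess its label-respecting image ($\le 2^n$ choices). Then compare the two systematic forms up to label-preserving row and column permutations. That is an edge-labelled bipartite graph isomorphism problem on at most $n$ vertices, solvable in $2^{o(n)}$ time (e.g.\ by quasipolynomial-time graph isomorphism).
\end{enumerate}
The total running time is $2^{O(n)}\poly(|G_1|)=\poly(|G_1|)$.
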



\subparagraph*{Efficient epimorphism testing of groups with limited Abelian composition factors.}

We next study the $\GrS$, for groups with no Abelian composition factors, that is, finite groups whose composition factors are all non‑Abelian simple groups. This class includes two important classes of groups, namely the \emph{direct product of non-Abelian simple groups}, and \emph{almost simple groups}. It forms a well-structured subclass of \emph{groups with no nontrivial Abelian normal subgroups} (a.k.a. Fitting-free groups). Groups with no nontrivial Abelian normal subgroups have been widely studied in various algorithmic contexts, including efficient isomorphism testing (see e.g., \cite{CH+2003,GrIBCGQ,GrIBCQ}).

Groups with no Abelian composition factors has also been studied extensively in group theory, particularly in the context of structural classification, enumeration, and understanding their minimal faithful permutation degree (see, e.g., \cite{wtAbCF-Benjamin, wtAbCF-Bercov, wtAbCF-Luca, wtAbCF-Robert, wtAbCF-Lucchini}). While algorithmic problems such as $\GpI$ and the Minimum Generating Set problem have been explored for various non‑solvable group classes including Fitting-free groups, the computational complexity of $\GrS$ even for groups with no Abelian composition factors has not been addressed previously. Our main result for this class is the following.

\begin{theorem}\label{thm-wtAbCF}
There is a polynomial-time algorithm for $\GrS$, restricted to pairs of groups $(G_1,G_2)$ such that $G_1$ has no Abelian composition factors.
\end{theorem}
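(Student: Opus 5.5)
The plan is to reduce the question to a question about normal subgroups of $G_1$. An epimorphism $G_1 \to G_2$ exists iff there is a normal subgroup $N \trianglelefteq G_1$ with $G_1/N \cong G_2$. If $G_1$ has no Abelian composition factors, neither does any quotient $G_1/N$, since its composition factors are among those of $G_1$. So first I check in polynomial time, by computing a composition series from the Cayley table, that $G_2$ also has no Abelian composition factors; otherwise I reject. It then suffices to show two things: (i) $G_1$ has only polynomially many normal subgroups, and they can be listed in polynomial time; (ii) isomorphism of each quotient $G_1/N$ with $G_2$ can be tested in polynomial time. For (ii), such groups are Fitting-free, because a nontrivial Abelian normal subgroup would contribute an Abelian composition factor. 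I then invoke the polynomial-time isomorphism test for Fitting-free groups of \cite{GrIBCQ}, which also returns an explicit isomorphism. Composing it with $G_1 \to G_1/N$ yields the epimorphism.

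For (i) I use the socle. $\Soc(G_1) = T_1\times\cdots\times T_k$ is a direct product of non-Abelian simple groups, and its minimal normal subgroups $M_1,\dots,M_r$ are the products over the $G_1$-orbits on $\{T_i\}$. Let $N \trianglelefteq G_1$ be nontrivial. Then $N$ contains some minimal normal subgroup, and $N\cap \Soc(G_1)$ is a product of some of the $M_j$, since the normal subgroups of a direct product of non-Abelian simple groups are subproducts. I plan to prove a structural lemma: in a group with no Abelian composition factors, the normal subgroups are determined recursively via the socle series. Concretely, if $N \cap \Soc(G_1) = M_S := \prod_{j\in S} M_j$, then $N/M_S$ is a normal subgroup of $G_1/M_S$ that meets the socle of $G_1/M_S$ appropriately. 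Then I proceed by induction on the length of the chief series.

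A cleaner route is the counting argument. The number of normal subgroups is at most the number of subsets of chief factors "chosen" along a chief series. The key fact is that with no Abelian chief factors, a normal subgroup is determined by which non-Abelian chief factors it covers. This gives a bound of $2^{\ell}$, where $\ell$ is the chief length, and that bound is too large in general. So I need a better argument. The one I commit to: each normal subgroup is generated, as a normal subgroup, by a single element. Indeed, for $N$ with $N\cap\Soc = M_S$, one can choose an element $x$ whose normal closure is $N$, by induction using that non-Abelian chief factors are generated normally by one element (products of conjugates of simple factors, pick one element with a nontrivial component in each). Given that, the normal subgroups are among the normal closures $\langle\langle x\rangle\rangle$ for $x\in G_1$, which are at most $|G_1|$ many and each computable in polynomial time. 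Listing these and deduplicating gives (i).

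The main obstacle is exactly this claim that every normal subgroup is normally generated by one element. I would prove it by induction on $|N|$. Take a minimal normal $M\le N$. By induction $N/M = \langle\langle yM\rangle\rangle$ in $G_1/M$. Then I want $m\in M$ such that $\langle\langle ym\rangle\rangle \supseteq M$. The subgroup $K=\langle\langle ym\rangle\rangle\cap M$ is normal in $G_1$, so $K$ is $1$ or $M$. If it is $1$ for all $m$, then every $ym$ generates a normal subgroup complementing-ish $M$ in $\langle\langle y\rangle\rangle M$, so each $ym$ centralizes $M$ modulo considerations. Taking quotients $ym\cdot (ym')^{-1}$ forces $m'^{-1}m$-type elements to commute with $M$. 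That makes $M$ have nontrivial center, contradicting non-Abelianness. Making this last step rigorous, via commutators $[ym,g]$ landing in $M$, is where I expect the real work. If it fails, I fall back to bounding normal subgroups via $\Soc$-orbits directly.
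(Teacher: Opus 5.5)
Your route is correct and genuinely different from the paper's. The paper fixes a chief series of $G_1$ refining the socle series and runs over all subsets of its chief factors (at most $2^{\log|G_1|}\le|G_1|$ of them). For each subset it uses Lemma~\ref{lem:nonabelian} to check, layer by layer, whether the chosen factors lift to the kernel of a quotient. It then tests that quotient against $G_2$ with the Fitting-free isomorphism algorithm of \cite{GrIBCQ}.

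You instead enumerate the normal closures $\langle\langle x\rangle\rangle$, $x\in G_1$. This rests on your lemma that, in a group with no Abelian chief factors, every normal subgroup is normally generated by one element. Your induction does prove this, and the step you flagged as the hard part is one line. For $a\in M$, the commutator $[ym,a]=(ym)^{-1}\cdot a^{-1}(ym)a$ lies in $\langle\langle ym\rangle\rangle\cap M$. If that intersection were trivial for all $m\in M$, then $y$ and $ym$ would both lie in $C_{G_1}(M)$. Hence $m=y^{-1}(ym)\in C_{G_1}(M)$ for every $m\in M$, so $M\le C_{G_1}(M)$ and $M$ would be Abelian. That contradicts $M\cong T^k$ with $T$ non-Abelian simple. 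The induction on $|N|$ must range over all groups without Abelian composition factors, which is fine since $G_1/M$ is again such a group.

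Your approach buys simplicity. There is no socle-series bookkeeping and no kernel-validity check, just at most $|G_1|$ normal closures, an order check, and an isomorphism test for each. The paper's layered approach buys extensibility to Corollary~\ref{cor:Abwidth}. There Abelian chief factors occur and your lemma fails; for instance, a central $\mathbb{Z}_p^2$ is not the normal closure of a single element.

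You were also wrong to discard the $2^{\ell}$ count. Every chief factor of $G_1$ has order at least $60$, so $\ell\le\log_{60}|G_1|$ and $2^{\ell}\le|G_1|$. This is essentially the bound the paper relies on. Using it would, however, require showing that a normal subgroup is determined by which chief factors it covers, which follows from a similar commutator argument.
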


We can extend this slightly in the direction of Fitting-free groups that allow Abelian composition factors of limited ``multiplicity;'' see Cor.~\ref{cor:Abwidth} and the definitions preceding it.

\subparagraph*{Organization.} 
The remainder of the paper is organized as follows. We begin with preliminaries in \Cref{sec-prelims}. In \Cref{sec:condition for EpiTesting}, we develop a criterion for epimorphism testing for large groups, reducing the problem to normal Hall subgroups and their complements. In \Cref{sec-abCyclic}, we prove \Cref{thm-AbelianCyclic}. In \Cref{sec-EE}, we establish \Cref{thm-EEtoSInduced}, \Cref{thm-RepEpiTOCode}, and \Cref{thm-ProdElem}. Finally, in \Cref{sec-wtAbCF}, we prove \Cref{thm-wtAbCF}. In \Cref{sec:conclusion}, we conclude with several open problems.

\section{Preliminaries}\label{sec-prelims}
In this section, we recall some relevant terminology required for the paper. An interested reader may refer to standard references such as \cite{Rotman1995,serre1977} for more details.

\subparagraph*{Group theory.} We consider groups with finitely many elements. The \emph{order} of group $G$ is the number of elements in a group $G$, denoted by $|G|$. The order of an element $g \in G$, denoted as $\ord(g)$, is the smallest positive integer $m$ such that $g^m=1$. The \emph{exponent} of $G$ is the smallest positive integer $m$ such that $g^m=1$ for all $g \in G$. Let $H$ be a subgroup of $G$, if for all $a \in G$, $a^{-1}Ha=H$ then we say that $H$ is a normal subgroup of $G$ (denoted by $H \triangleleft G$). A \emph{normal Hall} subgroup $N$ is a normal subgroup of $G$ with $\gcd(|N|,|G/N|)=1$. For any subgroup $H$ and any normal subgroup $K$ of $G$ we denote the subgroup $\{hk \,\vert  h\in H,k\in K\}=\{kh \,\vert  h\in H,k\in K\}$ by $HK$.  

Let $N$ be a normal subgroup of $G$. We say that $G$ is \emph{semidirect product} of $N$ by a subgroup $H$ of $G$, denoted as $G= N \rtimes H$ if $G=NH$ and $N \cap H=1$. Since $N \triangleleft G$, this gives rise to a homomorphism $\tau:H \to \aut(N)$ by $h \to \chi_h$, where $\chi_h: N \to N$ by $n \to h^{-1}nh$. In this case, we write $G=N \rtimes_\tau H$. We say $N$ is a \emph{normal subgroup} of this decomposition and $H$ the \emph{complement} of $N$ in $G$. Conversely, given two groups $N$ and $H$ and a homomorphism $\tau:H\rightarrow \aut(N)$, we can define a group $G:=\{(n,h) \mid n \in N, h \in H\}$ with the product operation, $(n_1,h_1)(n_2,h_2)=(n_1\chi_{h_1}(n_2),h_1h_2)$, for all $(n_1,h_1),(n_2,h_2)\in G$. This gives a construction of the outer semidirect product $G=N \rtimes_{\tau} H$. For a normal Hall subgroup $N$ of $G$, the Schur--Zassenhaus theorem guarentees that a complement subgroup $H \leq G$ such that $\gcd(|H|, |N|) = 1$ and $G = N \rtimes_{\tau} H$ exists. Furthermore, if $H$ and $K$ are complements of $N$, then $H$ and $K$ are conjugate (see, e.g., \cite{Rotman1995}).




\subparagraph*{Socle series and chief series.}
A \emph{minimal normal subgroup of $G$} is a normal subgroup $N \unlhd G$ with no nontrivial subgroups that are also normal in $G$; equivalently, if $M \unlhd G$ and $M \leq N$, then $M \in \{1,N\}$. A standard exercise is that every minimal normal subgroup of a finite group is a direct power $T^n$ of some finite simple group $T$. The \emph{socle} of a group $G$, denoted $\Soc(G)$ is the (characteristic) subgroup generated by all minimal normal subgroups of $G$. The \emph{socle series} of $G$ is defined by $\Soc^1(G) := \Soc(G)$ and $\Soc^i(G)$ is the unique subgroup of $G$ such that $\Soc^i(G) / \Soc^{i-1}(G) = \Soc(G / \Soc^{i-1}(G))$. 

A \emph{normal series} in $G$ is a series $1 = N_0\leq N_1 \leq N_2 \leq \dotsb \leq N_k = G$ such that $N_i \unlhd G$ for all $i$. A \emph{chief series} is a normal series that cannot be further refined (a so-called ``maximal'' normal series). Equivalently, a chief series is a normal series such that $N_i / N_{i-1}$ is a minimal normal subgroup of $G / N_{i-1}$ for all $i$. If $N_{*}$ is a chief series, then the factor groups $N_{i} / N_{i-1}$ are called \emph{chief factors} of $G$. Each chief factor is of the form $T^n$ for some simple group $T$ and some integer $n$, and the multiset of isomorphism types of chief factors is well-defined for $G$, independent of the chief series (see e.g., \cite[Theorem 8.4.4]{marshall-hall-theory}).

The socle series can be refined to a chief series by taking the minimal normal subgroups $M_1,\dotsc,M_k$ of $G / \Soc^i(G)$, taking their pre-images $\hat{M}_1,\dotsc,\hat{M}_k$ in $G$, and then inserting them in between $\Soc^i(G)$ and $\Soc^{i+1}(G)$ in any order, viz. $\Soc^i(G) \unlhd \Soc^i(G) M_1 \unlhd \Soc^i(G) M_1 M_2 \unlhd \dotsb \unlhd \Soc^i(G) M_1 M_2 \dotsb M_{k-1} \unlhd \Soc^{i+1}(G)$. We say $G$ has \emph{socle length} $\ell$ if $\ell$ is the least integer such that $G = \Soc^{\ell}(G)$.

\subparagraph*{The Code Isomorphism problem.}Let $\mathbb{F}$ be a field. A $d$-dimensional linear subspace $C \subseteq \mathbb{F}^n$ is called a \emph{linear code} of dimension $d$. A generating matrix of a linear code $C$, of dimension $d$, is a $d \times n$ matrix with row vectors being a basis of $C$. We will also use $C$ to denote the generating matrix of a linear code $C$. 

\begin{problem}($\codeIso$)\label{prob-codeIso}
Given two generating matrices $C_{d \times n}$ and $D_{d \times n}$ over the field $\mathbb{F}$, the $\codeIso$ Problem asks to determine the existence of $T \in {\rm GL}(d, \mathbb{F})$ and a permutation matrix $P_{n \times n}$ such that $T_{d \times d} C_{d \times n}P_{n \times n}=D_{d \times n}$.  
\end{problem}

\subparagraph*{Representation theory.} Let $V$ be a vector space of dimension $d$ over $\field$. A \emph{representation} of a group $G$ is a homomorphism $\tau:G \rightarrow \GL(d, \mathbb{F})$. A subspace $W$ of $V$ is called $G$-\emph{invariant} if for all $g \in G$ and $w \in W$, we have $\tau_g(w) \in W$. A representation $\tau:G \rightarrow {\GL}(d, \mathbb{F})$ is called \emph{irreducible} if the only $G$-invariant subspaces of $V$ are ${0}$ and $V$. 

Let $V$ and $W$ be two finite-dimensional vector spaces over a field $\mathbb{F}$ of dimension $d$ and $d'$, respectively. We say that two representations $\tau: G \rightarrow \GL(d, \mathbb{F})$ and $\gamma: G \rightarrow \GL(d', \mathbb{F})$ are \emph{equivalent} if there is an invertible linear map $f:V \rightarrow W$ such that  $f\circ \tau_g = \gamma_g \circ f$ for all $g \in G$. 

Let $N$ be an elementary abelian $p$-group and let $G= N \rtimes_\tau H$ be a group, where $\tau:H \to \aut(N)$ is an action of $H$ on $N$. Since $N$ is an elementary abelian $p$-group, $N \cong \mathbb{Z}_p^d$, and $\aut(N)=\GL(d, \mathbb{F}_p)$. Then the homomorphism $\tau$ can be seen as a representation of $H$ given by $\tau:H \to \aut(N)=\GL(d, \mathbb{F}_p)$.


\subparagraph*{Linear algebra and Modules.} We say that matrix $A_{d \times n}$ of rank $d$ is in standard form if $A=[I_d \vert A']$, i.e., the first $d$ columns of $A$ form the identity matrix. We can perform row operations on $A$ followed by a permutation of the columns to transform $A$ to a standard form, i.e., we can find $B$ and $B'$ such that $BAB'$ is in standard form.

Let $R$ be a ring  and $M,N$ be $R$-modules. A map $\phi:M \rightarrow N$ is an \emph{$R$-module homomorphism} if $\phi(\alpha x+y)=\alpha \phi(x)+\phi(y)$ for all $x,y \in M$ and $\alpha\in R$. A module $M$ is \emph{indecomposable} if $M = M_1 \oplus M_2$ implies $M_1 = 0$ or $M_2 = 0$.  

Let $G$ be a group. A (left) $G$-module is an abelian group $A$ on which $G$ acts by additive maps on the left, i.e., a left $G$-module consists of an abelian group $A$ together with a left group action $\tau: G\times A\rightarrow A$ such that $g \cdot (a_1+a_2)=g\cdot a_1+g\cdot a_2$, for all $a_1,a_2 \in A $ and $g \in G$. Similarly, we can define right $G$-module. Unless explicitly
mentioned otherwise the term ``$G$-module'' will always mean ``left $G$-module''. We refer reader to \cite{Weibel+1994,DF-Algebra} for more details on group modules.


\section{Condition for epimorphism testing} \label{sec:condition for EpiTesting}
In this section, prove that epimorphism testing for groups with normal Hall subgroups reduces to that for the normal Hall subgroups and their complements. Our proof is based on the ideas discovered by Taunt \cite[Theorem 3.3]{Taunt1955}. A similar result for isomorphism testing between groups was established using the same ideas (see e.g., \cite[Theorem 3]{QST}).

\begin{theorem}\label{thm-Taunt-Surjection}
Given $G_1=N_1 \rtimes_{\tau} H_1$, $G_2=N_2 \rtimes_{\gamma} H_2$, where $N_1, N_2$ are normal Hall subgroups. Then there is an epimorphism from $G_1$ to $G_2$ if and only if there exist an epimorphism  $\rho:N_1 \to N_2, $ and an epimorphism $\phi:H_1 \to H_2$, such that, $\forall h \in H_1$, 
\begin{equation} \label{equation}
 \rho \circ\tau(h)=\gamma(\phi(h))\circ \rho.
\end{equation}
\end{theorem}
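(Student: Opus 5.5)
We prove the two directions separately. The ``if'' direction is a direct construction; the ``only if'' direction uses coprimality and Schur--Zassenhaus.

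\textbf{The ``if'' direction.} Given $\rho$ and $\phi$ satisfying (\ref{equation}), we define $\Psi:G_1\to G_2$ by $\Psi(n,h)=(\rho(n),\phi(h))$ in the outer semidirect product coordinates. In the multiplication rule $(n_1,h_1)(n_2,h_2)=(n_1\chi_{h_1}(n_2),h_1h_2)$, write $\tau(h_1)$ for $\chi_{h_1}$. Then
\[
\Psi\bigl(n_1\tau(h_1)(n_2),\,h_1h_2\bigr)=\bigl(\rho(n_1)\,\rho(\tau(h_1)(n_2)),\,\phi(h_1)\phi(h_2)\bigr).
\]
By (\ref{equation}), $\rho(\tau(h_1)(n_2))=\gamma(\phi(h_1))(\rho(n_2))$, so the right-hand side equals $\Psi(n_1,h_1)\Psi(n_2,h_2)$. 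Hence $\Psi$ is a homomorphism. It is surjective because $\rho$ and $\phi$ are, so every pair $(n',h')$ has a preimage.

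\textbf{The ``only if'' direction, step 1: $\psi$ maps $N_1$ onto $N_2$.} Let $\psi:G_1\to G_2$ be an epimorphism. Since $\psi(N_1)$ is normal in $G_2$ and has order dividing $|N_1|$, its image in $G_2/N_2\cong H_2$ has order dividing both $|N_1|$ and $|H_2|$. This needs $\gcd(|N_1|,|H_2|)=1$, which is the main subtlety. The hypothesis only gives $\gcd(|N_i|,|H_i|)=1$ within each group. Surjectivity does give $|G_2|\mid |G_1|$, but a prime of $|N_1|$ could a priori divide $|H_2|$ (for instance $G_1=N_1$ with trivial complement and $G_2$ having $N_2=1$). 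So I expect the intended reading is that the Hall decompositions are compatible, i.e.\ $\pi(N_2)\subseteq\pi(N_1)$ and $\pi(H_2)\subseteq\pi(H_1)$ for the prime sets $\pi(\cdot)$, or equivalently that $N_1,N_2$ are the normal Hall $\pi$-subgroups for the same prime set $\pi$. I will state this assumption explicitly; it is how the theorem is applied later, where $N_i$ are the normal Hall subgroups of the corresponding type. Under it, $\psi(N_1)\le N_2$, since $N_2$ is the unique Hall $\pi$-subgroup containing all $\pi$-elements that are normal. Similarly $\psi(H_1)$ is a $\pi'$-subgroup. Since $G_2=\psi(N_1)\psi(H_1)$, orders force $\psi(N_1)=N_2$, and $\psi(H_1)$ is a complement of $N_2$.

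\textbf{Step 2: conjugate and read off $\rho,\phi$.} By Schur--Zassenhaus, $\psi(H_1)=gH_2g^{-1}$ for some $g\in G_2$. Replace $\psi$ by $c_{g^{-1}}\circ\psi$, which is still an epimorphism, still maps $N_1$ onto $N_2$, and now maps $H_1$ onto $H_2$. Set $\rho=\psi|_{N_1}$ and $\phi=\psi|_{H_1}$. Apply $\psi$ to $h^{-1}nh$ to get $\phi(h)^{-1}\rho(n)\phi(h)=\rho(\tau(h)(n))$, which is exactly (\ref{equation}).

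The main obstacle is Step 1's coprimality of $|N_1|$ with $|H_2|$, which must come from matching the prime sets of the Hall subgroups. The rest is routine.
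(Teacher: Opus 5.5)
Your proof is correct, and the extra hypothesis you impose is genuinely necessary rather than an artifact of your method. As literally stated, the ``only if'' direction is false. Your $C_p$ example shows this, and so does $G_1=G_2=C_6$ with $N_1=C_2$, $H_1=C_3$, $N_2=C_3$, $H_2=C_2$. The paper's proof relies on the same assumption silently: it asserts $|N_1/L|=|N_2|$ and $|H_1/K|=|H_2|$ without justification. Under your hypothesis this holds, because $N_1/L\cong N_1\ker f/\ker f$ is the normal Hall $\pi$-subgroup of $G_1/\ker f\cong G_2$.

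Your ``if'' direction is the same as the paper's. Your ``only if'' direction takes a different route.

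\emph{The paper's route.} It sets $L=\ker f\cap N_1$ and $K=\ker f\cap H_1$, and asserts $\ker f=L\rtimes K$. It then checks that $K$ acts trivially on $N_1/L$, so that $G_1/\ker f\cong (N_1/L)\rtimes_{\tau'}(H_1/K)$. Finally it invokes Taunt's isomorphism criterion for this pair and composes the resulting $\bar\rho,\bar\phi$ with the quotient maps.

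\emph{Your route.} You work with $\psi$ directly. Coprimality gives $\psi(N_1)=N_2$, Schur--Zassenhaus conjugacy lets you arrange $\psi(H_1)=H_2$, and restriction gives $\rho$ and $\phi$.

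\emph{What each buys.} Your version is shorter and self-contained: it essentially reproves the needed half of Taunt's theorem. It also avoids the decomposition $\ker f=L\rtimes K$, which the paper calls easy but which itself needs a Schur--Zassenhaus-type argument. And it isolates exactly where the prime-set compatibility enters. The paper's version has the merit of exhibiting the kernel and the quotient $(N_1/L)\rtimes(H_1/K)$ explicitly, which is the form its $\Hall{\Ab}{\Cyc}$ algorithm reuses.
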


\begin{proof}
    $(\implies)$ Let $f:G_1 \to G_2$ be an epimorphism from $G_1$ to $G_2$ such that $G_1/\ker f \cong G_2$. Let $L=\ker f \cap N_1$ and $K=\ker f \cap H_1$. It is easy to see that $\ker f = L\rtimes_{\tau} K$. 

We now show that the action of $K$ on $N_1/L$, given by $(nL)^{k}=knk^{-1}L$ for $k \in K$, $n\in N_1$, is trivial. Since $f(k)=1$ we have $f(knk^{-1}) = f(n)$, which implies that $n^{-1}knk^{-1} \in \ker f$. Clearly $n^{-1}knk^{-1} \in N_1$, therefore $n^{-1}knk^{-1} \in L$ and $knk^{-1}L=nL$. Define an action $\tau'$ of $H_1/K$ on $N_1/L$ given by $\tau'(hK):=\tau(h)L$. Since $K$ acts on $N_1/L$ trivially, $\tau'$ is well-defined. Indeed, if $hK=h'K$, then $h'=hk$ for some $k\in K$. $\tau'(h'K)(nL)
=\tau(h')(n)L=\tau(hk)(n) L=(\tau(h) \tau(k))(n)L=\tau(h)(\tau(k)(n)L)=\tau(h)(n)L=\tau'(hK)(nL)$. 

Consider a map $\pi:\frac{N_1 \rtimes_{\tau} H_1}{L \rtimes_{\tau} K} \to \frac{N_1}{L} \rtimes_{\tau'} \frac{H_1}{K}$ defined by $\pi(n_1,h_1)=(n_1 L, h_1 K)$. It is easy to see that $\pi$ is an isomorphism. Moreover, $\frac{N_1}{L} \rtimes_{\tau'} \frac{H_1}{K} \cong N_2 \rtimes_\gamma H_2$, with $|N_1/L|=N_2$ and $|H_1/K|=H_2$. By \cite[Theorem 3.3]{Taunt1955}, there exist an isomorphism $\overline{\rho}: N_1/L \to N_2 $ and an isomorphism $\overline{\phi}: H_1/K \to H_2 $ such that, $\forall hK \in H_1/K$, $\overline{\rho} \circ\tau'(hK)=\gamma(\overline{\phi}(hK))\circ \overline{\rho}$. By lifting $\overline{\phi}, \overline{\rho}, \tau'$ to $\phi, \rho$ and $\tau$ respectively, via the canonical epimorphism we obtain $ \rho \circ\tau(h)=\gamma(\phi(h))\circ \rho$.

$(\impliedby)$
Consider a map $f:G_1 \to G_2$ defined by $f(n,h)=(\rho(n), \phi(h))$. It is clear that $f$ is an epimorphism from $G_1$ to $G_2$. Now, consider 
\begin{align*}
f((n,h)(n',h'))&=f(n\tau_{h}(n'),h h')\\
&=(\rho(n\tau_{h}(n')), \phi(h h'))\\
&=(\rho(n)\rho(\tau_{h}(n'),\phi(h)\phi(h'))\\
&=(\rho(n)\gamma_{\phi(h)}(\rho(n')),\phi(h)\phi(h'))\\
&=(\rho(n),\phi(h))(\rho(n'),\phi(h'))\\
&=f(n,h)f(n',h').
\end{align*}
Hence, $f$ is an epimorphism from $G_1$ to $G_2$.
\end{proof}

\section{Efficient epimorphism testing of $\Hall{\Ab}{\Cyc}$}\label{sec-abCyclic}
In this section, we design an algorithm for finding an epimorphism between two groups $G_1$ and $G_2$ such that $G_1 \in \Hall{\Ab}{\Cyc}$. Before we present our algorithm, we prove certain preliminary results which will be required in the algorithm. 

Consider two groups $G_1, G_2 \in \Hall{\Ab}{\Cyc}$ defined as $G_1=N_1 \rtimes_{\tau} H_1$ and $G_2=N_2 \rtimes_{\gamma} H_2$, where $N_1$ and $N_2$ are Abelian, and $H_1$ and $H_2$ are cyclic. Moreover, $\gcd(|H_1|,|N_1|)=1$ and $\gcd(|H_2|,|N_2|)=1$. Suppose there exists an epimorphism $\rho: N_1 \to N_2$ that satisfies the condition of \Cref{equation} for some epimorphism $\phi: H_1 \to H_2$. Let $K=\ker(\phi)$ be the unique subgroup of $H_1$ whose order is $|H_1|/|H_2|$. Define a subgroup $L$ of $N_1$ generated by the set $\{v^{\tau(k)}-v \mid v\in N_1$, $k\in K\}$. It is clear that $L \leq \mathrm{ker}(\rho)$.

The following observation on $L$ is of independent interest, and we realized it in the course of coming up with the proofs below. Although it is ultimately not required for our proofs, as it may be of future use we include its proof in \Cref{app}.

\begin{observation}\label{lem:max-indecomposable}
With the notations above, $L$ is a direct sum of maximal indecomposable $H_1$-submodules of $N_1$.
\end{observation}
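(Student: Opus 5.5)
The plan is to show that $L=[N_1,K]$ is an $H_1$-module direct summand of $N_1$, via a Fitting-type decomposition for coprime action, and then to read off its indecomposable pieces. Throughout, $N_1$ is a $\Z H_1$-module through $\tau$. Since $\gcd(|K|,|N_1|)=1$, the integer $|K|$ is invertible modulo the exponent of $N_1$. So we can define the averaging operator
\[
e \;=\; |K|^{-1}\sum_{k\in K}\tau(k)\in \mathrm{End}(N_1).
\]
Because $H_1$ is cyclic, hence Abelian, $e$ commutes with every $\tau(h)$, $h\in H_1$, and is therefore an $H_1$-module endomorphism. It is idempotent because $\tau(k)e=e$ for all $k\in K$.

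\textbf{Step 1: $\mathrm{im}(e)=C_{N_1}(K)$ and $\ker(e)=L$.}
\begin{enumerate}[(i)]
\item The image of $e$ is the fixed submodule $C_{N_1}(K)$: every $e(v)$ is fixed by $K$, and $e$ acts as the identity on fixed vectors.
\item We have $L\subseteq\ker e$, since $e(v^{\tau(k)}-v)=0$.
\item Conversely, if $e(v)=0$, then
\[
v=v-e(v)=|K|^{-1}\sum_{k}\bigl(v-v^{\tau(k)}\bigr)\in L .
\]
\end{enumerate}
Hence
\[
N_1=C_{N_1}(K)\oplus L
\]
as $H_1$-modules. Both summands are $H_1$-invariant, either because $e$ is an $H_1$-map or because $K\unlhd H_1$.

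\textbf{Step 2: decompose into indecomposables.} Since $N_1$ is finite, write $N_1=\bigoplus_i M_i$ with each $M_i$ an indecomposable $H_1$-submodule (by Krull--Schmidt; split first into Sylow parts if convenient). Each $M_i$ is $e$-stable, because $e$ is a polynomial in the $\tau(k)$. Applying Step 1 to $M_i$ gives
\[
M_i=C_{M_i}(K)\oplus[M_i,K].
\]
By indecomposability, either $K$ acts trivially on $M_i$, or $M_i=[M_i,K]$ and $C_{M_i}(K)=0$. Since $e$ respects the direct sum,
\[
L=\ker e=\bigoplus_i \ker(e|_{M_i})=\bigoplus_{i:\,K \text{ acts nontrivially on } M_i} M_i .
\]
So $L$ is a direct sum of indecomposable summands of $N_1$. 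Each of these is maximal among indecomposable submodules in the sense intended: it is a full summand of $N_1$, not a proper piece of one.

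\textbf{Main obstacle.} The step needing care is interpreting ``maximal indecomposable'' precisely and checking that the conclusion is independent of the chosen Krull--Schmidt decomposition. I would handle this by noting that $L$ is canonical, being $\ker e$. Any indecomposable summand $M$ of $N_1$ satisfies either $M\subseteq L$ or $M\cap L=0$, by the dichotomy in Step 2. So every decomposition of $N_1$ splits compatibly with $N_1=C_{N_1}(K)\oplus L$.

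The coprimality hypothesis is what makes $e$ well defined. Without it the argument fails: for $p$-groups acting on $p$-groups, $L$ need not be a summand at all.
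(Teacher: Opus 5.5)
Your proof is correct, but it takes a genuinely different route from the paper's.

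The paper fixes a decomposition $N_1=\bigoplus_i V_i$ into indecomposable $H_1$-submodules and notes $L=\bigoplus_i V_i'$ with $V_i'=[V_i,K]$. It then uses Th\'evenaz's result that the only submodules of such a coprime indecomposable $V_i$ are the $p_i^lV_i$. If $V_i'=p_i^lV_i$ with $1\le l<b$, then $\tau(k)=\iota+p_i^lM_k$ on $V_i$. A binomial-coefficient valuation estimate, with a separate case for $p_i=2$, $l=1$, gives $\tau(k)^{p_i^{b-l}}=\iota$. Coprimality then forces $\tau(k)=\iota$, which is a contradiction.

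You instead use the averaging idempotent $e$ to obtain the canonical coprime-action splitting $N_1=C_{N_1}(K)\oplus L$. After that, the dichotomy on each indecomposable piece is immediate.

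Your argument is more elementary and self-contained: it needs neither Th\'evenaz's structure theorem nor the $p$-adic bookkeeping. It also uses only $\gcd(|K|,|N_1|)=1$ and $K\unlhd H_1$. Conjugation by any $h\in H_1$ permutes $K$, so $\tau(h)$ commutes with $e$ without any need for $H_1$ to be cyclic. In addition, it shows outright that $L$ is canonical (it is $\ker e$) and has the explicit complement $C_{N_1}(K)$. It is compatible with every decomposition into indecomposables, not just the one fixed at the start.

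What the paper's route buys is extra information about the submodule lattice of each $V_i$, but that is not needed for this observation.

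One minor terminological point: the existence of $N_1=\bigoplus_i M_i$ needs only finiteness. Krull--Schmidt is the uniqueness statement, and your canonicity remark makes it unnecessary anyway.
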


Let $N$ be an Abelian group, but not necessary elementary Abelian. A $(\Z/p^k \Z)[H]$-module is a $\Z H$-module $N$ where the exponent of $N$ (the LCM of the orders of the elements of $N$) divides $p^k$.

\begin{lemma}\cite[Corollary 1.2]{thevenaz1981} \label{lem:Thevenaz}
Let $H$ be a finite group. If $p$ is coprime to $|H|$, then any indecomposable $(\Z/p^k \Z)[H]$-module is generated (as an $H$-module) by a single element. 
\end{lemma}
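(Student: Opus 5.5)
We plan to prove the lemma (Thévenaz's Corollary 1.2) by reducing to the semisimple situation modulo $p$, using Maschke's theorem and Nakayama's lemma. Let $R=\Z/p^k\Z$ and let $M$ be an indecomposable (finite) $R[H]$-module. Since $\gcd(p,|H|)=1$, the element $|H|$ is a unit in $R$. Hence the averaging operator $e=\frac{1}{|H|}\sum_{h\in H} h$ and, more generally, Maschke-type averaging of projections are available over $R$. This means every $R[H]$-submodule that is an $R$-direct summand is also an $R[H]$-direct summand.

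\textbf{Step 1: reduce to the top.} Consider $\overline{M}=M/pM$, an $\mathbb{F}_p[H]$-module. By Maschke's theorem it is semisimple. By Nakayama's lemma over the local ring $R$ (or simply because $pM$ is contained in every maximal submodule, as $p$ is nilpotent on $M$), a set generates $M$ as an $R[H]$-module if and only if its image generates $\overline{M}$. So it suffices to show that $\overline{M}$ is cyclic as an $\mathbb{F}_p[H]$-module.

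\textbf{Step 2: the top is simple.} We will show that $\overline{M}$ is in fact simple whenever $M$ is indecomposable. The standard route uses projective covers. Since $|H|$ is invertible in $R$, $R[H]$ is a separable $R$-algebra. Moreover, the idempotents of $\mathbb{F}_p[H]$ lift to $R[H]$ because $pR[H]$ is a nilpotent ideal. Write the simple decomposition $\overline{M}=S_1\oplus\dots\oplus S_r$. Each $S_i$ is a summand of $\mathbb{F}_p[H]$, so it is of the form $\mathbb{F}_p[H]\bar e_i$; lift $\bar e_i$ to an idempotent $e_i\in R[H]$. Let $P_i=R[H]e_i$. This is a projective module with top $S_i$, so $P=\bigoplus P_i\to M$ is a projective cover. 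Separability (relative projectivity with respect to the trivial subgroup) makes every $R[H]$-module that is $R$-projective itself projective. However, $M$ need not be $R$-free; for example, $\Z/p\Z$ with trivial action is an $R[H]$-module. So this argument needs modification.

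\textbf{Modification.} Instead we use that $R[H]$-modules are relatively $1$-projective. Every $R[H]$-module $M$ is a direct summand of $R[H]\otimes_R M=\mathrm{Ind}_1^H\mathrm{Res}\,M$, via the averaging splitting. As an $R$-module, $\mathrm{Res}\,M$ is a direct sum of cyclic modules $R/p^{a}R$. Hence $M$ is a summand of $\bigoplus_a (R/p^aR)[H]$. Each $(R/p^aR)[H]$ decomposes into modules $(R/p^a)[H]e$ with $e$ a primitive idempotent, and these have simple tops and local endomorphism rings. By Krull--Schmidt (which holds since these are finite modules), the indecomposable $M$ is isomorphic to some $(R/p^aR)[H]e$. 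That module is cyclic, generated by $e$.

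The main obstacle is this last decomposition: we must verify that the primitive idempotent modules $(R/p^a)[H]e$ are indecomposable with local endomorphism rings, so that Krull--Schmidt applies and identifies $M$ with one of them. This follows from idempotent lifting modulo the nilpotent ideal $p(R/p^a)[H]$. Under that lifting, primitive idempotents of $\mathbb{F}_p[H]$ correspond to primitive idempotents of $(R/p^a)[H]$, giving local endomorphism rings $e(R/p^a)[H]e$. Since the paper cites this as a known result, we would ultimately rely on \cite{thevenaz1981} for these standard facts.
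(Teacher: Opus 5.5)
Your final argument is correct. The paper contains no proof to compare it with, since the lemma is simply imported from Th\'evenaz, so what you give is an independent, self-contained route.

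All the work is done by your \emph{Modification}. Since $|H|$ is a unit in $R=\mathbb{Z}/p^k\mathbb{Z}$, the map $m\mapsto \frac{1}{|H|}\sum_{h\in H} h\otimes h^{-1}m$ is an $R[H]$-linear splitting of the multiplication map $R[H]\otimes_R M\to M$. Hence $M$ is a summand of $\bigoplus_j \Lambda_{a_j}$, where $\Lambda_a=(\mathbb{Z}/p^a\mathbb{Z})[H]$. Each $\Lambda_a$ is a finite ring, so it is a direct sum of modules $\Lambda_a e$ with $e$ primitive and $\mathrm{End}(\Lambda_a e)\cong (e\Lambda_a e)^{\mathrm{op}}$ local. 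Locality holds because a finite ring with no nontrivial idempotents is local, and endomorphisms over $R[H]$ and over $\Lambda_a$ coincide since $p^a$ kills these modules. Krull--Schmidt then forces $M\cong \Lambda_a e$, which is generated by $e$.

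Step~1 and the projective-cover attempt in Step~2 are never used and can be deleted. You also need not defer to Th\'evenaz for the ``standard facts'': idempotent decompositions of finite rings and Krull--Schmidt for finite-length modules suffice.

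Your route buys more than the lemma, with almost no extra work, and recovers facts the paper cites rather than proves:
\begin{itemize}
\item $\Lambda_a e$ is a summand of a free $\mathbb{Z}/p^a\mathbb{Z}$-module, hence free over that local ring. So the underlying group is homocyclic, $(\mathbb{Z}/p^a\mathbb{Z})^d$, as asserted in the proof of Theorem~\ref{thm-AbelianCyclic}.
\item The top $M/pM\cong \mathbb{F}_p[H]\bar e$ is simple, which was the goal of your abandoned Step~2.
\item Since $J(\Lambda_a)=p\Lambda_a$, $M$ is the projective cover over $\Lambda_a$ of $M/pM$. It is therefore determined by its exponent and its reduction mod $p$, which is exactly what \Cref{claim:moduleiso} takes from Th\'evenaz's Theorem~1.1.
\end{itemize}

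What the citation supplies beyond this is the uniserial structure used in the appendix, namely that the only submodules of an indecomposable $V$ are the $p^lV$. In your framework that needs an extra step, e.g.\ identifying $e\Lambda_a e$ with a commutative chain ring (a Galois ring).
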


\begin{lemma}\label{lem:compute-max-indecomposable}
Let $N$ be an Abelian group (not necessarily elementary Abelian), and $H$ a cyclic group acting on $N$ by automorphisms. Moreover, suppose $\gcd(|N|,|H|)=1$. Then a decomposition of $N$ into its indecomposable $H$-submodules can be computed in polynomial time in $|N|,|H|$.
\end{lemma}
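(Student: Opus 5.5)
First I reduce to a fixed prime. Since $N$ is Abelian, it splits canonically as $N=\bigoplus_p N_p$ into its Sylow subgroups. Each $N_p$ is characteristic, hence $H$-invariant, so any decomposition of $N$ into indecomposable $H$-submodules is the union of decompositions of the $N_p$. The $N_p$ are found in polynomial time by sorting elements according to their order. So I may assume $N=N_p$ is a $(\Z/p^k\Z)[H]$-module with $p\nmid |H|$. Write $H=\langle h\rangle$; a generator is found by checking element orders. The $H$-submodules of $N$ are then exactly the subgroups invariant under the single automorphism $\tau(h)$. Invariance of a subgroup is checked in polynomial time given its element list.

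The algorithm is a greedy peeling procedure built on \Cref{lem:Thevenaz}. That lemma says every indecomposable summand is cyclic as an $H$-module. So the candidate summands are among the at most $|N|$ submodules $\langle v\rangle_H$, the span of the orbit $\{\tau(h)^i v\}$. Each is computable in time $\poly(|N|,|H|)$ by closing the orbit under addition.

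\begin{enumerate}
\item Among the submodules $M=\langle v\rangle_H$, pick one that is indecomposable and is a direct summand of the current module $N$.
\item Compute an $H$-invariant complement $M'$ with $N=M\oplus M'$.
\item Recurse on $M'$.
\end{enumerate}

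The recursion terminates after at most $\log_2|N|$ rounds. To decide whether $M$ is a summand and to find a complement, I use the standard averaging argument, valid because $|H|$ is invertible modulo $p^k$. If $M$ has some abelian-group complement, take the corresponding projection $\pi:N\to M$ and average it:
\[
\bar\pi=|H|^{-1}\sum_{g\in H}\tau(g)\pi\tau(g)^{-1}.
\]
This $\bar\pi$ is an $H$-equivariant projection onto $M$, and $\ker\bar\pi$ is an $H$-complement. Abelian-group complements of a subgroup of a finite Abelian $p$-group are found in polynomial time, for example by Smith normal form on a basis. If $M$ has no complement at all, it is not a summand.

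The main obstacle is certifying that $M$ is indecomposable and guaranteeing that one exists among the cyclic candidates. Existence is the easier part. Any summand of minimal size in a full decomposition is indecomposable, and it is cyclic by \Cref{lem:Thevenaz}. I would therefore take a candidate $\langle v\rangle_H$ of minimal order that is a summand of $N$. Such an $M$ is indecomposable: a nontrivial splitting $M=A\oplus B$ would make $A$ a smaller summand of $N$. Since $A$ is cyclic by \Cref{lem:Thevenaz} applied to an indecomposable piece of $A$, it would appear among the candidates, contradicting minimality. Hence the minimal-order summand candidate is indecomposable, and no explicit indecomposability test is needed. Each round enumerates at most $|N|$ candidates and does polynomial work per candidate, so the total running time is polynomial in $|N|$ and $|H|$.
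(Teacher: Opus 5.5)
Your proof is correct, and its skeleton is the paper's. By Th\'evenaz's lemma every indecomposable summand has the form $\langle v\rangle_H$, so you enumerate these at most $|N|$ cyclic submodules, test which are direct summands, split off an $H$-complement, and recurse.

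The differences lie in the two subroutines.

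For the summand test, the paper calls the general algorithm of Cioc\v{a}nea-Teodorescu for modules over finite rings as a black box. You instead use coprimality a second time: averaging an abelian-group projection over $H$ shows that an $H$-submodule is an $H$-summand iff it is a summand of the underlying abelian group. So you only need an abelian-group complement. One concrete way to get it: lift a basis $\bar x_1,\dots,\bar x_t$ of $N/M$ to elements $x_i$ of the same orders, searching each coset. Such lifts exist iff $M$ has a complement, and then $\langle x_1,\dots,x_t\rangle$ is one. This makes your argument self-contained and elementary. It relies on the coprimality hypothesis in a place where the paper's black box does not need it.

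For the recursion, the paper splits off \emph{any} proper cyclic summand and recurses on both pieces, declaring a piece indecomposable only when the search fails. You choose a nonzero cyclic summand of minimal order, which is automatically indecomposable, so you only recurse on the complement. Your sentence justifying this is slightly garbled. The precise argument: if $M=A\oplus B$ nontrivially, an indecomposable summand of $A$ is cyclic by Th\'evenaz and is a smaller nonzero summand of $N$, contradicting minimality.

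Finally, you perform up front the Sylow splitting that the paper mentions only as practical preprocessing.
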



\begin{proof}
For each $v \in N$, we compute the $H$-submodule $\langle v \rangle_H$ of $N$. For each such submodule, we then use the algorithm of Cioc\v{a}nea-Teodorescu \cite[Theorem 1.1]{Teodorescu} to test whether $\langle v \rangle_H$ appears as a direct summand in a direct sum decomposition of $N$ as an $H$-module, and if so, to find a complement, that is, another $H$-submodule $U \leq N$ such that $N = \langle v \rangle_H \oplus U$. As soon as one such direct sum decomposition is found, the algorithm is then called recursively on $\langle v \rangle_H$ and $U$ to see if they can be further decomposed. Furthermore, if $N$ is decomposable, then there must exist a $v \in N$ such that $\langle v \rangle_H$ is a direct summand, for $N$ must contain an indecomposable direct summand by induction on its size, and for a prime $p$ dividing $|N|$, by Lemma~\ref{lem:Thevenaz}, in this setting, indecomposable $H$-modules are generated by a single element. If no such direct sum decomposition is found, then $N$ is an indecomposable $H$-module, and the recursion stops.
\end{proof}

In practice, some pre-processing can be done beforehand; for example, at the start, $N$ can be split into a direct sum of its Sylow $p$-subgroups for different primes $p$. And within each Sylow $p$-subgroup, if $\langle v \rangle_H = p \langle u \rangle_H$ for some $u \in N$, then $\langle v \rangle_H$ is not a direct summand of $N$ and can be excluded from the above search.

We are now ready to prove \Cref{thm-AbelianCyclic}. 

\vspace{.1cm}

\noindent{\emph{\bf Proof of \Cref{thm-AbelianCyclic}.}} 
First we check that $G_2$ is in $\Hall{\Ab}{\Cyc}$ (otherwise reject). In order to test whether there is an epimorphism between $G_1$ and $G_2$, we first run the algorithm from \cite[Theorem 1]{QST} on the inputs $G_1$ and $G_2$ to obtain all normal Hall subgroups and their complements. We then select an arbitrary normal Hall subgroup $N_1$ of $G_1$ (respectively, $N_2$ of $G_2$) and its complement $H_1$ (respectively, $H_2$) such that $|N_1| \geq |N_2|$, $|H_1| \geq |H_2|$, $N_1$ and $N_2$ are Abelian, and $H_1$ and $H_2$ are cyclic. All these conditions can be verified in polynomial time for each normal Hall subgroup and all of its complements. If no such pair  of normal Hall subgroup and its complement exists, then by \Cref{thm-Taunt-Surjection}, we conclude that there is no epimorphism from $G_1$ to $G_2$. Otherwise, we proceed as follows.

Since every finite cyclic group has a unique subgroup of any given order dividing its order, there exists a unique subgroup $K \triangleleft H_1$ such that $|H_1/K|=|H_2|$. If no such $K$ exists, then the algorithm will reject. Moreover, $K$ can be computed in polynomial time. Let $\tau$ (respectively, $\gamma$) be a conjugation action of $H_1$ on $N_1$ (respectively, $H_2$ on $N_2$). We compute the normal subgroup $L=\gen{\{v^{\tau(k)}-v \, \vert \, v \in N_1, k \in K \}} \triangleleft N_1$, and determine its order in polynomial time.

If $|N_1/L|=|N_2|$ then there is an epimorphism from $G_1$ to $G_2$ if and only if the groups $N_1/L \rtimes_{\tau} H_1/K$ and $N_2 \rtimes_{\gamma} H_2$ are isomorphic\footnote{The quotient $H_1/K$ naturally inherits the action $\tau$ since $K$ acts trivially on $N_1/L$ (see, e.g., proof of \Cref{thm-Taunt-Surjection})} (see the proof of \Cref{thm-Taunt-Surjection}). There exists a polynomial-time algorithm to test whether these two groups are isomorphic and, if so, to compute an explicit isomorphism between them by Le Gall \cite[Theorem 1.2]{LeGall_classical_iso}. If such an isomorphism exists, we can lift it to an epimorphism from $G_1$ to $G_2$ by composing the canonical epimorphism, $G_1 \to N_1/L \rtimes_{\tau} H_1/K$, with the computed isomorphism from $N_1/L \rtimes_{\tau} H_1/K$ to $G_2$.

Otherwise, we may assume $|N_1/L|>|N_2|$ (for if $|N_2| > |N_1 / L|$ then the algorithm can simply return that there is no epimorphism). Since $L \triangleleft N_1$ and $N_1$ is an $H_1$-module, it follows that $L$ and $N_1/L$ both are an $H_1$-module. 
Let $\bigoplus V_i$ and $\bigoplus U_j$ denote the decomposition of $N_1/L$ and $N_2$ into their indecomposable $H_1$-submodules and $H_2$-submodules, respectively. These decompositions can be computed in polynomial time by \Cref{lem:compute-max-indecomposable}. Note that the underlying group of each $V_i$ is $(\Z/p_{i}^{k_i} \Z)^{d_i}$, for some $p_i$, and the exponent of $V_i$ divides $p_i^{k_i}$ (see, e.g., \cite{thevenaz1981,GrochowLevet}). Moreover, the underlying group of the quotient $V_i/p_iV_i$ is an elementary Abelian group $(\Z/p_i \Z)^{d_i}$ \cite{thevenaz1981}. 

Let $\bar{\phi}:H_1/K \to H_2$ be an isomorphism. All such isomorphisms can be computed in polynomial time, as both $H_1/K$ and $H_2$ are cyclic groups. Fix such an isomorphism $\bar{\phi}$. For each $i$, the module $V_i$ inherits an $H_2$-module structure via $\bar{\phi}$, since $K$ acts trivially on $V_i$. 

We now construct a bipartite graph $\Gamma_{\overline{\phi}}$ as follows. The vertex set of $\Gamma_{\overline{\phi}}$ is the disjoint union of two sets, one containing a vertex labeled $i$ for each $V_i$, and the other containing a vertex labeled $j$ for each $U_j$. There is an edge $(i,j)$ between vertices $i$ and $j$ if and only if there exists $l_{ij} \leq k_i$ such that there is an $H_2$-module isomorphism from $V_i/p_i^{l_{ij}}V_i$ to $U_j$.

\begin{claim}\label{claim:moduleiso}
For some $l_{ij} \leq k_i$ there is an $H_2$-module isomorphism from $V_i/p_i^{l_{ij}}V_i$ to $U_j$ if and only if there is an $H_2$-modules isomorphism from $V_i/p_iV_i$ to $U_j/p_iU_j$, and $\ep(U_j) \leq \ep(V_i)$.
\end{claim}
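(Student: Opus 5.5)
The plan is to use the structure of indecomposable modules under a coprime action. As recalled just before the claim (via \cite{thevenaz1981}), an indecomposable $(\Z/p^k\Z)[H]$-module with $p\nmid |H|$ is homocyclic, with underlying group $(\Z/p^k\Z)^d$. I would combine this with two standard facts. First, such a module is projective over $(\Z/p^k\Z)[H]$. Second, Nakayama's lemma applies to the reduction mod $p$. Throughout, $H_2$ acts on $V_i$ through $\bar\phi$, and $p_i\nmid |H_2|$ since the Hall condition gives $\gcd(|N_1|,|H_1|)=1$. I take $U_j$ to be nonzero, since it is a summand in the decomposition, so $U_j/p_iU_j \neq 0$.

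($\Rightarrow$) Suppose $\psi: V_i/p_i^{l}V_i \to U_j$ is an $H_2$-isomorphism with $l=l_{ij}\le k_i$. Then $l\ge 1$, since $U_j\ne 0$. Consequently $U_j$ is a $p_i$-group, and $\ep(U_j)=\ep(V_i/p_i^lV_i)=p_i^{l}\le p_i^{k_i}=\ep(V_i)$. Reducing $\psi$ mod $p_i$ gives an isomorphism $(V_i/p_i^lV_i)/p_i(V_i/p_i^lV_i)\cong U_j/p_iU_j$. Because $l\ge1$, the left-hand side is $V_i/p_iV_i$. This direction is routine.

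($\Leftarrow$) Since $U_j/p_iU_j\cong V_i/p_iV_i\neq 0$, the module $U_j$ is a nontrivial $p_i$-group. Being indecomposable, it is homocyclic, so $U_j\cong(\Z/p_i^{m}\Z)^{e}$ with $p_i^m=\ep(U_j)\le p_i^{k_i}$. Set $l_{ij}=m\le k_i$ and put $M:=V_i/p_i^mV_i$. Then $M\cong(\Z/p_i^m\Z)^{d_i}$ as a group. Comparing dimensions of the isomorphic tops gives $e=d_i$, so $|M|=|U_j|=p_i^{md_i}$. Both $M$ and $U_j$ are modules over $R=(\Z/p_i^m\Z)[H_2]$.

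The key step is to lift the given isomorphism $\bar\psi: M/p_iM\to U_j/p_iU_j$ to an $R$-homomorphism $\psi:M\to U_j$. I would show $M$ is projective over $R$ as follows. Since $M$ is free over $\Z/p_i^m\Z$, the composite $M\to M/p_iM \xrightarrow{\bar\psi} U_j/p_iU_j$ lifts to a $\Z/p_i^m\Z$-linear map $\psi_0:M\to U_j$. Averaging gives
\[
\psi=\frac{1}{|H_2|}\sum_{h\in H_2} h\,\psi_0\,h^{-1},
\]
which is legitimate because $|H_2|$ is invertible mod $p_i$. This $\psi$ is $H_2$-equivariant, and it still reduces to $\bar\psi$ mod $p_i$, since $\bar\psi$ is already equivariant. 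Next, $\psi(M)+p_iU_j=U_j$, so by Nakayama's lemma, applied to the finite $p_i$-group $U_j$ (where $p_i$ is nilpotent), $\psi$ is surjective. Since $|M|=|U_j|$, $\psi$ is an isomorphism $V_i/p_i^{m}V_i\cong U_j$, as required.

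The main obstacle is the lifting step, namely justifying that a mod-$p_i$ isomorphism lifts to a genuine module map. The averaging trick above is what I would use; it relies only on coprimality. The other point to check is that indecomposability really forces homocyclicity of both $V_i$ and $U_j$, so that the order comparison $|M|=|U_j|$ goes through. For this I would rely on the cited fact from \cite{thevenaz1981} (cf.\ \Cref{lem:Thevenaz}).
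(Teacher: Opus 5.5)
Your proof is correct, but the converse takes a different route from the paper's. The forward direction is essentially the paper's. You argue it more directly: an isomorphism $V_i/p_i^{l}V_i \to U_j$ with $l \geq 1$ simply descends to an isomorphism of the quotients mod $p_i$. The paper instead checks that the reduced map is an equivariant epimorphism and invokes Schur's lemma.

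In the converse, both proofs take $l_{ij}$ with $p_i^{l_{ij}}$ equal to the exponent of $U_j$, and both use homocyclicity of indecomposables to match ranks. They part ways at the key step. The paper cites Th\'evenaz's Theorem~1.1 as a black box: for coprime actions, reduction mod $p$ reflects isomorphism of such modules. You prove exactly the instance needed:
\begin{enumerate}
\item lift $\bar\psi$ to a $\mathbb{Z}/p_i^m\mathbb{Z}$-linear map, using freeness of $M$;
\item average over $H_2$ (valid since $p_i \nmid |H_2|$) to get an equivariant lift that still reduces to $\bar\psi$;
\item deduce surjectivity from $\psi(M)+p_iU_j=U_j$ by Nakayama, and bijectivity from $|M|=|U_j|$.
\end{enumerate}

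What your route buys is the following. It relies only on the homocyclic structure of indecomposables, which the paper also uses (recalled just before the claim), and is otherwise self-contained. It is also constructive. It turns the mod-$p_i$ isomorphism $f_{ij}$ returned by the conjugacy test into the module isomorphism $\rho_{ij}$ in time polynomial in $|V_i|$ and $|H_2|$. That is precisely the step where the paper calls Cioc\v{a}nea-Teodorescu's algorithm. The paper's citation is shorter, but leaves that lifting opaque.
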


\noindent \emph{\bf Proof of \cref{claim:moduleiso}}.
Let $\rho_{ij}:V_i/p_i^{l_{ij}}V_i \to U_j$ be an $H_2$-module isomorphism. Note that  $(\Z/p_{i}^{k_i} \Z)^{d_i}$ and $(\Z/p_{i}^{k_i'} \Z)^{d_j}$ are the underlying groups of $V_i$ and $U_j$, respectively. Additionally, we have $(\Z/p_{i}^{l_{ij}} \Z)^{d_i} \cong (\Z/p_{i}^{k_i} \Z)^{d_i}/p_i^{l_{ij}}(\Z/p_{i}^{k_i} \Z)^{d_i} = V_i/p_i^{l_{ij}}V_i \cong U_j =(\Z/p_{i}^{k_i'} \Z)^{d_j}$ implies that $d_i=d_j$ and $k_i'=l_{ij}$. Define $f_{ij}: V_i/p_iV_i \to U_j/p_iU_j$ by $f_{ij}(\textbf{v}+p_iV_i)= \rho_{ij}(\textbf{v} \mod p_i^{l_{ij}}V_i) + p_iU_j$, where $\textbf{v}=(v_1,\ldots,v_{d_{i}}) \in V_i$ and $V_i/p_iV_i \cong (\Z/p_{i}\Z)^{d_i}$. It is easy to see that $f_{ij}$ is an epimorphism. Moreover, $f_{ij} ({\bf{v}}^h+p_iV_i)=\rho_{ij}({\bf{v}}^h \mod p_i^{l_{ij}}V_i) + p_iU_j={\rho_{ij}(\textbf{v} \mod p_i^{l_{ij}}V_i)}^h + p_iU_j=f_{ij} ({\bf{v}}+p_iV_i)^h$, where $h \in H_2$. Therefore, by Schur’s lemma \cite[Proposition 4]{serre1977} we have $f_{ij}$ is an $H_2$-module isomorphism.

Next, we have for each $v \in V_i/p_i^{l_{ij}}V_i$, there is $v' \in V_i$ such that $\ord(v')=\ord(v) m_v$ for some positive integer $m_v$. This implies $\text{exponent}(U_j)=\text{exponent}(V_i/p_i^{l_{ij}}V_i)={\rm lcm}(\ord(v) \,\vert\, v \in V_i/p_i^{l_{ij}}V_i)={\rm lcm}(\ord(v')/m_v \,\vert\, v' \in V_i) \leq {\rm lcm}(\ord(v') \,\vert\, v' \in V_i)=\text{exponent}(V_i)$. 

Conversely, let $f_{ij}:V_i/p_iV_i \to U_j/p_iU_j$ be an $H_2$-module isomorphism and let $\ep(U_j) \leq \ep(V_i)$. Let $(\Z/p_{i}^{k_i} \Z)^{d_i}$ and $(\Z/p_{i}^{k_i'} \Z)^{d_j}$ be the underlying groups of $V_i$ and $U_j$, respectively. Since $f_{ij}$ is an isomorphism, we have $d_i=d_j$. Take $l_{ij}=k_i' \leq k_i$. By \cite[Theorem 1.1]{thevenaz1981}, if modules $V_i/p_i^{l_{ij}}V_i$ and $U_j$ are not isomorphic then so the modules $(V_i/p_i^{l_{ij}}V_i) / (p_i V_i/p_i^{l_{ij}}V_i)$ and $U_j/p_i U_j$. This completes the proof of the claim.
\qed

First, we see that the graph $\Gamma_{\overline{\phi}}$ and the isomorphism $\rho_{ij}$ can be constructed in deterministic polynomial time. By \Cref{claim:moduleiso}, to test whether there is an $H_2$-module isomorphism from $V_i/p_i^{l_{ij}}V_i$ to $U_j$ it is enough to test whether there is an $H_2$-modules isomorphism from $V_i/p_iV_i$ to $U_j/p_iU_j$, and that $\ep(U_j) \leq \ep(V_i)$. Since the Cayley tables of the underlying groups of $V_j$ and $U_j$ can be constructed, we can compute $\ep(U_j)$ and $\ep(V_i)$ in polynomial time. 

Both $V_i/p_iV_i$ and $U_j/p_iU_j$ are elementary Abelian $p_i$-groups. Let $M_{V_i}, M_{U_j} \in {\GL}(\mathbb{F}_{p_i},{d_i})$ be the matrices induced by an action of $H_2$ on $V_i/p_iV_i$ and $U_j/p_iU_j$ respectively. Testing whether there is an $H_2$-modules isomorphism between these modules is equivalent to testing whether $M_{V_i}$ and $M_{U_j}$ are conjugate in ${\GL}(\mathbb{F}_{p_i},{d_i})$. Since  $\GL(\field_p,n)$ is an $\field_p$-algebra,
this conjugacy problem can be solved in deterministic polynomial time as described in \cite[Theorem 2]{CIK97}. Moreover, this algorithm also compute a matrix $T$ such that $T^{-1} M_{V_i} T=M_{U_j}$, if such a $T$ exists. Then $f_{ij}$ is the corresponding linear map. Moreover, we can compute a module isomorphism $\rho_{ij}$ by \cite[Corollary 1.2]{Teodorescu}. Therefore, the graph $\Gamma_{\overline{\phi}}$ and the isomorphism $\rho_{ij}$ can be constructed in deterministic polynomial time.

Next, we compute a maximum matching in $\Gamma_{\overline{\phi}}$, which can be done in polynomial time. If the matching saturates all the vertices corresponding to the $U_j$'s, i.e., if its size equals the number of indecomposable $H_2$-modules of $N_2$, then the matching defines the epimorphism from $N_1/L$ to $N_2$, say $\overline{\rho}$ (otherwise, we conclude that no epimorphism from $N_1/L$ to $N_2$ exists and hence there is also no epimorphism from $G_1$ to $G_2$). Moreover, since we have computed the homomorphism $\rho_{ij}$ corresponding to the matched edges, we can explicitly construct $\overline{\rho}$ in polynomial time. Note that $(\overline{\rho}, \bar{\phi})$ gives an isomorphism from $N_1/L \rtimes_{\tau} H_1/K$ to $G_2$ since $\overline{\rho}$ and $\bar{\phi}$ satisfies \Cref{equation} by construction. Next, we lift this isomorphism to an epimorphism from $G_1$ to $G_2$ by composing the canonical epimorphism $G_1 \to N_1/L \rtimes_{\tau} H_1/K$ with the computed isomorphism. \qed

\section{Efficient epimorphism testing of $\Hall{\prod \ElemAb}{\ElemAb}$}\label{sec-EE}

In this section, we solve $\GrS$ for groups from $\Hall{\prod \ElemAb}{\ElemAb}$. We first prove \Cref{thm-EEtoSInduced} in \Cref{sec:EEtoSInduced}. Then we further study $\RepEpi$ in \Cref{subsec-thm3} where we prove \Cref{thm-RepEpiTOCode}. We conclude this section by proving \Cref{thm-ProdElem}.

\subsection{Proof of \Cref{thm-EEtoSInduced}}\label{sec:EEtoSInduced}
\noindent \emph{Proof of \Cref{thm-EEtoSInduced}.} \emph{$\GrS$ of Groups in $\Hall{\ElemAb}{\ElemAb}$ to $\RepEpi$.} 
By listing all normal Hall subgroups and their complements, we can find a normal Hall subgroup $N_1$ of $G_1$ (resp. $N_2$ of $G_2$) with complement $H_1$ (resp. $H_2$), such that both $N_1$ and $N_2$ are elementary Abelian $p$-groups, both $H_1$ and $H_2$ are elementary Abelian $q$-groups. Thus, in order to test the epimorphism of the $G_1$ and $G_2$, we first solve the group epimorphism problem for the normal and complement parts. Since the normal and complement parts of $G_1$ and $G_2$ are both elementary abelian, their epimorphism problems can be solved in polynomial time by checking their orders. Given this, the only task left is to test whether there exist epimorphisms $\rho:N_1 \to N_2$ and $\phi:H_1 \to H_2$ satisfying $\Cref{equation}$, which can be reduces to solving $\RepEpi$ by considering the representations $\tau$ and $\gamma$ as an instance of the problem $\RepEpi$.

\emph{$\RepEpi$ to $\GrS$ of Groups from $\Hall{\ElemAb}{\ElemAb}$.} As discussed in \Cref{sec-prelims}, a representation $\tau:H \to \aut(N)$, defines a group $G=N \rtimes_{\tau} H$. Given two representations, $\tau: H_1 \to \aut(\Z_p^{m_1})$ and $\gamma: H_2 \to \aut(\Z_p^{m_2})$, we can construct two groups $G_1=\Z_p^{m_1} \rtimes_{\tau} H_1$ and $G_2=\Z_p^{m_2} \rtimes_{\gamma} H_2$. Since $H_i$'s are elementary Abelian $q$-groups, both $G_1,G_2 \in \Hall{\ElemAb}{\ElemAb}$. Then we can call an oracle to test if there is an epimorphism from $G_1$ to $G_2$. By \Cref{thm-Taunt-Surjection}, the two representations are $\Epi$ if and only if there is an epimorphism from $G_1$ to $G_2$, which gives the reduction.
\qed

\subsection{Proof of \Cref{thm-RepEpiTOCode}}\label{subsec-thm3}

\textbf{Representations of $\mathbb{Z}_q^l$ over $\mathbb{Z}_p$.} Before we proceed to prove \Cref{thm-RepEpiTOCode}, we recall some basic facts on representations of $\mathbb{Z}_q^{l}$ over $\mathbb{Z}_p$, where $p$ and $q$ are two different primes. The reader can refer to \cite[Section A5]{QST} for details.

Let $\Phi_q(x)$ be the $q^\text{th}$ cyclotomic polynomial. Let $h_1(x)\cdots h_r(x)$ are factors of $\Phi_q(x)$ over $\mathbb{Z}_p$ such that $h_i$'s are monic polynomial of same degree $m=(q-1)/r$, where $m$ is the order of $p$ in the multiplicative group $(\mathbb{Z}/q\mathbb{Z})^{\times}$. Let $M \in \mathrm{GL}(\mathbb{Z}_p,d)$ be the companion matrix of $h_1(x)$. For each nonzero vector $v \in \mathbb{Z}_q^l$, define $ v^\star : \mathbb{Z}_q^l \to \mathbb{Z}_q, v^\star(u) = (v,u) \text{ (inner product of $u$ and $v$)} $, and define a representation $f_v : \mathbb{Z}_q^l \to \mathrm{GL}(\mathbb{Z}_p,d), f_v(u) = M^{v^\star(u)}$. Then $f_v$ is an irreducible representation of $\mathbb{Z}_q^l$ over $\mathbb{Z}_p$, and $\{f_v \, \vert \, v\in \mathbb{Z}_q^l\}$ is the set of all irreducible representations of $\mathbb{Z}_q^l$ over $\mathbb{Z}_p$.

 

Let $\tau: \mathbb{Z}_q^l \to \mathrm{GL}(\mathbb{Z}_p,m_1)$ and $\gamma: \mathbb{Z}_q^l \to \mathrm{GL}(\mathbb{Z}_p,m_2)$ be two representations of $\mathbb{Z}_q^l$ and let $m_2 \leq m_1$. By Maschke’s theorem (see e.g., \cite[Theorem 1]{serre1977}, both $\tau$ and $\gamma$ can be written as direct sum of irreducible representations as follows, $\tau \cong f_{v_1}^{ k_1} \oplus \cdots \oplus f_{v_t}^{ k_t}, v_i \in \mathbb{Z}_q^l$ and $\gamma=f_{u_1}^{r_1}\oplus \cdots \oplus f_{u_{t^{'}}}^{r_{t^{'}}}, u_i \in \mathbb{Z}_q^l$.

\begin{lemma}\label{lem:irrep}
With the notation above, $\tau$ and $\gamma$ are $\Epi$ if and only if for every irreducible representation $f_{u_j}$ of $\gamma$ with multiplicity $r_{j}$, there is an irreducible representation $f_{v_i}$ of $\tau$ equivalent to $f_{u_j}$ with multiplicity at least $r_{j}$ .
\end{lemma}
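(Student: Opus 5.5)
We argue via complete reducibility. By Maschke's theorem, since $p\neq q$ the group algebra $\Z_p[\Z_q^l]$ is semisimple, so every submodule of a representation is a direct summand and every surjective module map splits.

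\emph{($\Leftarrow$)} Suppose that for each $j$ there is an $i=\sigma(j)$ with $f_{v_i}\cong f_{u_j}$ and $k_i\ge r_j$. Distinct $j$ give inequivalent $f_{u_j}$, so $\sigma$ is injective. For each $j$, fix an isomorphism $\theta_j: f_{v_{\sigma(j)}}\to f_{u_j}$. Define $\rho$ on the isotypic block $f_{v_{\sigma(j)}}^{k_{\sigma(j)}}$ by sending the first $r_j$ copies onto $f_{u_j}^{r_j}$ via $\theta_j$ coordinatewise, and killing the remaining $k_{\sigma(j)}-r_j$ copies. Define $\rho$ to be zero on the blocks $f_{v_i}^{k_i}$ with $i\notin\operatorname{im}\sigma$. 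After conjugating by the change of basis realizing $\tau\cong\bigoplus f_{v_i}^{k_i}$, this $\rho$ is a surjective linear map satisfying $\rho\circ\tau(h)=\gamma(h)\circ\rho$ for all $h$. Hence $\tau$ and $\gamma$ are epivalent.

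\emph{($\Rightarrow$)} Let $\rho:V\to W$ be a surjective intertwiner. Then $\ker\rho$ is a $G$-invariant subspace. By Maschke, $V=\ker\rho\oplus U$ for some invariant complement $U$, and $\rho|_U:U\to W$ is an isomorphism of representations. So $\gamma$ is equivalent to a subrepresentation of $\tau$.

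The key step is the uniqueness of isotypic multiplicities, which is where the irreducible decomposition is used. Write $\tau\cong U\oplus\ker\rho$ and decompose both summands into irreducibles. By Krull--Schmidt (or Jordan--H\"older), the multiplicity of each irreducible in $\tau$ equals its multiplicity in $U$ plus its multiplicity in $\ker\rho$. Alternatively, for an irreducible $S$, the multiplicity of $S$ in a semisimple module $M$ equals $\dim\mathrm{Hom}(S,M)/\dim\mathrm{End}(S)$, and this quantity is additive over direct sums.

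It follows that for each $j$ the multiplicity $r_j$ of $f_{u_j}$ in $\gamma\cong U$ is at most its multiplicity in $\tau$. That multiplicity is $k_i$ if some $f_{v_i}\cong f_{u_j}$, and $0$ otherwise. Since $r_j\ge1$, such an $i$ must exist, and then $k_i\ge r_j$.

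The main point of care is that the indexing $f_{v}$ is not injective: $f_v\cong f_{v'}$ when $v'$ is a suitable $p$-power multiple of $v$. The multiplicity statement must therefore be read with the irreducible summands of $\tau$ grouped into equivalence classes, so that each $k_i$ is the total multiplicity of the class of $f_{v_i}$. I would state this normalization explicitly, assuming the $f_{v_i}$ are pairwise inequivalent and likewise the $f_{u_j}$, before running the argument above.
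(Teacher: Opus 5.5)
Your proof is correct and follows the same route as the paper: you use Maschke to split off $\ker\rho$, so that $\gamma$ becomes a direct summand of $\tau$, and compare multiplicities; conversely, you build $\rho$ by projecting onto the matching isotypic summands. Your normalization that the $f_{v_i}$ (and the $f_{u_j}$) be pairwise inequivalent, needed because $f_v\cong f_{p^a v}$, is left implicit in the paper but is genuinely required for the statement to hold as written.
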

\begin{proof}
Assume that $\tau$ and $\gamma$ are $\Epi$. Let $T:  \Z_p^{m_1} \to \Z_p^{m_2}$ be a surjective linear map. Since both $ \Z_p^{m_1}$ and $\Z_p^{m_2}$ are both $\Z_q^{l}$-modules, $T$ is also a surjective $\Z_q^{l}$-module homomorphism. By fundamental theorem of module homomorphism, we have $\Z_p^{m_1}/ \ker(T) \cong \Z_p^{m_2}$ which implies that $\Z_p^{m_1} \cong \ker(T) \oplus \Z_p^{m_2}$. Thus, each $f_{u_j}$ appear in the decomposition of $\tau$ with multiplicity at least $r_j$ for $1 \leq j \leq t^{'}$. 
Conversely, if all irreducible constituents of $\gamma$ appear in $\tau$ with the appropriate 
multiplicities, then we can define a surjective $\mathbb{Z}_q^l$-module homomorphism 
$T: \mathbb{Z}_p^{m_1} \to \mathbb{Z}_p^{m_2}$ by projecting onto the relevant direct summand. \end{proof}

Let $\tau: \mathbb{Z}_q^{l_1} \to \mathrm{GL}(\mathbb{Z}_p,m_1)$ and $\gamma:  \mathbb{Z}_q^{l_2} \to \mathrm{GL}(\mathbb{Z}_p,m_2)$ be two representations. Let $f_{u}$ be an irreducible representation of $\Z_{q}^{l_2}$ over $\Z_p$, appears in the decomposition of $\gamma$. Let $\phi: \Z_q^{l_1} \to \Z_q^{l_2}$ be a surjective linear transformation. Then the induced representation of $f_u$ by $\phi$ gives the representation $f_u \circ \phi$. Moreover, $(f_u \circ \phi)(v)=f_{\phi^{T}(u)}(v)$, for all $v \in \mathbb{Z}_q^{l_1}$ \cite[Section 5.1]{QST}. 

We decompose $\tau=f_{v_1}^{k_1}\oplus \cdots \oplus f_{v_t}^{k_t}$, for $v_i \in \Z_{q}^{l_1}, i\in [t]$, and $\gamma=f_{u_1}^{r_1}\oplus \cdots \oplus f_{u_{t'}}^{r_{t'}}$, for $u_i \in \Z_{q}^{l_2}, i\in [t']$ into their irreducible representations such that $k_1+\dots+k_t=m_1$ and $r_1+\dots+r_{t'}=m_2$. This decomposition can be computed in polynomial time, as shown in
\cite[Proposition 2]{QST}. Consider a set $\mathcal{L}_1=\{v_1,\dots,v_1,\dots, v_t, \dots, v_t\}$ of vectors in $\Z_q^{l_1}$, where each $v_i$ appears exactly $k_i$ times in the set $\mathcal{L}_1$. Similarly, let $\mathcal{L}_2=\{u_1,\dots,u_1,\dots, u_{t'}, \dots, u_{t'}\}$  of vectors in $\Z_q^{l_2}$, where each $u_j$ appears exactly $r_j$ times in the set $\mathcal{L}_2$. Using $\mathcal{L}_1$ and $\mathcal{L}_2$, we construct two matrices, $M_1=(v_1,\dots,v_1,\dots, {v_t, \dots, v_t})$ of order $l_1\times m_1$, and $M_2=(u_1,\dots,u_1,\dots, u_t', \dots, u'_{t'})$ of order $l_2\times m_2$.

To test if there is an epimorphism from $G_1$ to $G_2$, where $G_1$ and $G_2$ identified as $G_1=\Z_p^{m_1} \rtimes_{\tau} \Z_q^{l_1}$ and $G_2=\Z_p^{m_2} \rtimes_{\gamma} \Z_q^{l_2}$, by \Cref{thm-EEtoSInduced} we can view $\tau$ (resp. $\gamma$) as  representations of $\Z_{q}^{l_1}$ (resp. of $\Z_{q}^{l_2}$) over $\Z_p$ of dimension $m_1$ (resp. $m_2$). We solve $\RepEpi$ problem for $\tau$ and $\gamma$. \\

\noindent \emph{\bf Proof of \Cref{thm-RepEpiTOCode}.}  By \Cref{lem:irrep}, $\tau$ and $\gamma \circ \phi$ are $\Epi$ if and only if for each $j \in [t']$, there exists $i \in [t]$ such that $f_{\phi^{T}(u_{j})}$ is equivalent to $f_{v_{i}}$ and $r_{j} \leq k_{i}$. This latter condition is equivalent to verifying that $\{\phi^T(u_1),\dots,\phi^T(u_{r_{t'}})\} \subseteq \{v_1,\dots,v_t\}$ as each irreducible representation of $\Z_q^{l_1}$ is of the form $f_{v_i}$ for some $v_i \in \Z_q^{l_1}$. To test whether $\{\phi^T(u_1),\dots,\phi^T(u_{r_{t'}})\} \subseteq \{v_1,\dots,v_t\}$, it is enough to compute the linear transformation $\phi$, a matrix $P_{m_1\times m_2}$ such that each row and each column of $P$ contains at most one entry equals to $1$ (with all other entries being zero), and the equality $M_1 P= \phi^{T} M_2$ holds. We now prove that finding such $\phi$ and $P$ can be reduced to an instance of $\codeIso$.

Without loss of generality, we may assume that $\mathrm{rank}(M_1) = l_1$ (resp. $\mathrm{rank}(M_2) = l_2$).  We can perform row operations on $M_1$ (resp. on $M_2$) followed by
a permutation of the columns to transform $M_1$ (resp. $M_2$) to a
standard form. Let $A_1 \in \GL(\mathbb{Z}_q,l_1)$ (resp. $A_2 \in \GL( \mathbb{Z}_q,l_2)$) and $B_1$ a permutation matrix in $\GL(\mathbb{Z}_q,m_1)$ (similarly $B_2 \in \GL(\mathbb{Z}_q,m_2)$) be such that $A_1 M_1B_1=[I_{l_1}|M_1']$ (resp. $A_2 M_2B_2=[I_{l_2}|M_2']$). Since $P$ is an injection matrix, we can write $P=SQ$, where $S=\begin{bmatrix}
    I_{m_2}\\
    \bf{0}
\end{bmatrix}$ is of dimension ${m_1 \times m_2}$ and $Q$ is a permutation matrix of dimension $m_2\times m_2$.

Note that the number of zero columns of $M_2$ can be at most the number of zero columns of $M_1$, as otherwise, there can not exist $P$ and $\phi^T$ satisfying $M_1 P= \phi^{T} M_2$ (in which case the algorithm immediately rejects). Therefore, once we get to this point, we can safely delete all the zero columns of $M_1$ and $M_2$. Now 
\begin{align}
   M_1 P &= \phi^{T} M_2  \nonumber \\
   A_1^{-1}[I_{l_1} \mid M_1']B_1^{-1} SQ&=\phi^{T}A_2^{-1}[I_{l_2} \mid M_2']B_2^{-1} \nonumber\\
   [I_{l_1} \mid M_1']B_1^{-1}SQ B_2&=A_1 \phi^{T} A_2^{-1}[I_{l_2} \mid M_2'] \label{eq:last}
\end{align}

Note that $\rk([I_{l_1} \mid M_1']B_1^{-1}S)=\rk(M_1P)=\rk(\phi^TM_2)=l_2$. Let $\psi$ be the matrix such that $\psi [I_{l_1} \mid M_1']B_1^{-1}S=\begin{bmatrix} \tilde{M_1}\\ 
\mathbf{0}
\end{bmatrix}$, where $\tilde{M_1}$ is of dimension ${(l_2\times m_2)}$ and $\rk(\tilde{M_1})=l_2$. Let $\tilde{Q}=QB_2$. We have $ \psi  [I_{l_1} \mid M_1']B_1^{-1}SQ B_2=\begin{bmatrix} \tilde{M_1} \tilde{Q}\\ 
\mathbf{0}
\end{bmatrix}$. By \Cref{eq:last}, there exists $\tilde{\phi^T}$ such that $ \psi  A_1 \phi^{T} A_2^{-1}[I_{l_2} \mid M_2']=  \begin{bmatrix}
\bar{\phi^T}\\
\mathbf{0}
\end{bmatrix}  [I_{l_2} \mid M_2']$, where $\tilde{\phi^T}_{(l_2\times l_2)}$ is of rank $l_2$. It follows now from \Cref{eq:last} that 
\begin{align}
 \tilde{M_1} \tilde{Q} =\bar{\phi^T}  [I_{l_2} \mid M_2']  \label{eq:CodeIso}
\end{align}

\Cref{eq:CodeIso} is an instance of the $\codeIso$ problem, completing the proof. \qed

\Cref{cor-ElemElem} then follows from \Cref{thm-EEtoSInduced} and \Cref{thm-RepEpiTOCode}. At that point, Babai's algorithm for $\codeIso$ in singly exponential time \cite{GrIBCGQ} gives a polynomial-time algorithm for $\RepEpi$ for elementary Abelian groups, finishing the proof of \Cref{cor-ElemElem}.

Next, we prove \Cref{thm-ProdElem}.\\

\noindent \emph{\bf Proof of \Cref{thm-ProdElem}.} The idea for $\Hall{\ElemAb}{\ElemAb}$ can be extended to $\Hall{\prod \ElemAb}{\ElemAb}$ as follows. 
First we check that $G_2$ is in $\Hall{\prod \ElemAb}{\ElemAb}$ (otherwise reject). For both $G_1$ and $G_2$, we compute all the normal Hall subgroups and their complements in polynomial time by \cite[Theorem 1]{QST}. Additionally, we can identify $G_1, G_2$ as $G_1=\prod_{i\in S_1}\Z_{p_i}^{k_{i}} \rtimes_{\tau} \Z_q^{l_1}$ and $G_2=\prod_{i\in S_2}\Z_{p_i}^{k_{i}'} \rtimes_{\gamma} \Z_q^{l_2}$ with the associated actions as $\tau$ and $\gamma$, respectively. 

Now we need to test if there exist an epimorphism $\psi: \prod_{i\in S_1}\Z_{p_i}^{k_{i}} \rightarrow \prod_{i\in S_2}\Z_{p_i}^{k_{i}'} $ and $\phi: \Z_q^{l_1} \rightarrow \Z_q^{l_2} $ such that $\psi \circ \tau(h_1)= \gamma(\phi(h_1)) \circ \psi, \forall h_1 \in \Z_q^{l_1}$. Let  $\tau_i: \Z_q^{l_1}\rightarrow \GL(\Z_{p_i},k_{i})$ be the projection of $\tau$ into the $i$-th component for $i\in S_1$, and let $\gamma_i: \Z_q^{l_2}\rightarrow \GL(\Z_{p_i},k_{i}')$ be the projection of $\gamma$ into the $i$-th component for $i\in S_2$.
For each $\tau_i$ and $\gamma_i$, let $M_i$ and $M_{i'}$ be the matrices arising from the irreducible representation $\tau_i$ and $\gamma_i$, respectively (see proof of \Cref{thm-RepEpiTOCode}). Define $M_1:=\operatorname{diag} (M_{i})_{i\in S_1}, M_2=\operatorname{diag} (M_{i'})_{i'\in S_2} $, i.e., $M_1$ (resp. $M_2$) is a block diagonal matrix with diagonal blocks $M_i$ (resp. $M_{i'}$). Then it suffices to solve $M_1P=\phi M_2$ as in \Cref{thm-RepEpiTOCode}. \qed

\section{Efficient epimorphism testing in groups with limited Abelian composition factors}\label{sec-wtAbCF}
In this section, we prove \Cref{thm-wtAbCF}. We first prove the following lemma. 


\begin{lemma} \label{lem:nonabelian}
Suppose $H$ is a quotient of $G$, and let $\ell$ be the socle length of $G$. Then there exist integers $k_1,\dotsc,k_\ell$ and groups $M_{ij}$ for $i=1,\dotsc,\ell$ and $j=1,\dotsc,k_i$ for all $i$ such that
\begin{enumerate}
\item $M_{ij}$ is a minimal normal subgroup of $G / \Soc^{i-1}(G)$
\item There is a minimal normal subgroup $\widehat{M}_{ij}$ of $G_{i} := G / \langle \widehat{M}_{i'j'} : i' < i \text{ and } j' \in [k_{i'}]\rangle$ such that the natural quotient map $G_i \to G / \Soc^{i-1}(G)$ maps $\widehat{M}_{ij}$ isomorphically onto $M_{ij}$.
\item $H \cong G_{\ell+1}$.
\end{enumerate}
\end{lemma}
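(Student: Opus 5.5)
We interpret the lemma as saying that the kernel $N$ of a given epimorphism $\pi: G \to H$ can be "peeled off" in layers aligned with the socle series of $G$. The $\widehat{M}_{ij}$ are understood as subgroups of $G$ containing the earlier ones, so that $G_i$ is $G$ modulo the subgroup generated by all earlier layers, and the last quotient $G_{\ell+1}$ is $G/N \cong H$.

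The plan is to work inductively up the socle series of $G$, intersecting the kernel with each layer. Fix $N = \ker\pi$ and set $N_i := N \cap \Soc^i(G)$. This gives a normal series $1 = N_0 \le N_1 \le \dotsb \le N_\ell = N$ of $G$, since $\Soc^\ell(G) = G$.

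At stage $i$, consider $N_i/N_{i-1}$. The map $x N_{i-1} \mapsto x\Soc^{i-1}(G)$ embeds $N_i/N_{i-1}$ into $\Soc^i(G)/\Soc^{i-1}(G) = \Soc(G/\Soc^{i-1}(G))$, because $N_i \cap \Soc^{i-1}(G) = N_{i-1}$. Its image is a $G$-normal subgroup of a socle. A normal subgroup of a socle is a direct product of some of the minimal normal subgroups of the ambient group; this is the standard fact that the socle is completely reducible as a $G$-group. Choose such minimal normal subgroups $M_{i1},\dots,M_{ik_i}$ of $G/\Soc^{i-1}(G)$ whose product is the image. This gives item 1.

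For item 2, take $\widehat{M}_{ij}$ to be the preimage in $N_i/N_{i-1}$ of $M_{ij}$ under the embedding. Pulled back to $G$, this is $N_{i-1}$ times the corresponding elements. Then $\langle \widehat{M}_{i'j'} : i'<i\rangle = N_{i-1}$, so $G_i = G/N_{i-1}$. The natural map $G/N_{i-1} \to G/\Soc^{i-1}(G)$ restricts injectively on $N_i/N_{i-1}$, hence maps each $\widehat{M}_{ij}$ isomorphically onto $M_{ij}$.

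Minimal normality of $\widehat{M}_{ij}$ in $G_i$ follows from injectivity. A $G$-normal subgroup of $\widehat{M}_{ij}$ maps to a $G$-normal subgroup of $M_{ij}$, which must be trivial or all of $M_{ij}$. So the subgroup is trivial or all of $\widehat{M}_{ij}$.

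Finally, $G_{\ell+1} = G/N_\ell = G/N \cong H$, giving item 3.

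The main obstacle is the complete-reducibility step: a $G$-normal subgroup $X$ of $\Soc(\bar G)$ is a product of minimal normal subgroups of $\bar G$. I would prove this by the standard argument. Write $\Soc(\bar G) = \prod_{t} M_t$, a product of minimal normal subgroups. Greedily pick $M_t$ with $M_t \cap X' = 1$ to build a complement $C$ of $X$, so that $\Soc(\bar G) = X \times C$. Then $X \cong \Soc/C$ is a quotient of a direct product of minimal normal subgroups and is again such a product.

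A subtlety remains in the non-Abelian case. The isomorphic copy of $X$ must be realized by actual minimal normal subgroups inside $X$. For this, I would note that $X$ is normal in $\bar G$ and itself contains minimal normal subgroups of $\bar G$, then induct on $|X|$ by splitting off one minimal normal subgroup $M \le X$ together with a normal complement inside $X$, using the same greedy argument.
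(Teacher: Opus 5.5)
Your proof is correct and is essentially the paper's argument. Both filter the kernel by its intersections with the socle series of $G$, and both use the fact that each layer embeds, via the natural map, into $\Soc(G/\Soc^{i-1}(G))$, so that minimal normal subgroups correspond isomorphically.

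The only difference is direction. The paper takes all minimal normal subgroups of $G/K_{i-1}$ lying in $K/K_{i-1}$, pushes them forward, and relies on the identity $K_{i-1}=K\cap\Soc^{i-1}(G)$, which it asserts without proof. You instead set $N_i=N\cap\Soc^i(G)$ outright and pull back a decomposition of the image. This makes that identity and the termination $N_\ell=N$ automatic, at the price of invoking complete reducibility of the socle explicitly.
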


\begin{proof}
Let $K \unlhd G$ be such that $G/K \cong H$. Let $M_{11},\dotsc,M_{1,k_1}$ be the minimal normal subgroups of $G$ that are contained in $K$. Then we can define $\widehat{M}_{1j} = M_{1j}$ to satisfy the first two parts of the lemma. Define $K_1 = \langle M_{1j} | j \in [k_1] \rangle$.

Now let $i > 1$, and suppose inductively that $M_{i'j'}$, $\widehat{M}_{i'j'}$ have been defined for all $i' < i$ and all $j' \in [k_{i'}]$ that satisfy (1) and (2) and such that $M_{i',1},\dotsc,M_{i',k_{i'}}$ are all the minimal normal subgroups of $G / \Soc^{i'-1}(G)$ that are contained in $K / \Soc^{i'-1}(G)$. Define $K_{i-1} = \langle M_{i'j'} | i' <i; j' \in [k_{i'}] \rangle$, and let $\widehat{M}_{i1},\dotsc,\widehat{M}_{i,k_i}$ be the minimal normal subgroups of $G/K_{i-1}$ that are contained in $K / K_{i-1}$. First, we claim that they are mapped isomorphically onto their images by the natural quotient map $G / K_{i-1} \to G / \Soc^{i-1}(G)$; call these images $M_{ij}$. For if some $\widehat{M}_{ij}$ is not mapped isomorphically onto its image by this map, then it must be contained in the kernel of the  map, hence in $\Soc^{i-1}(G)$. But since $\widehat{M}_{ij}$ was contained in $K / K_{i-1}$, if it were also in $\Soc^{i-1}(G)$, then by construction $\widehat{M}_{ij}$ would have been contained in $K_{i-1}$ (since the latter is equal to $K_i \cap \Soc^{i-1}(G)$), contradicting the fact that it was a minimal normal subgroup of $G / K_{i-1}$.
Let $M_{ij}$ be the image of $\widehat{M}_{ij}$ in $G / \Soc^{i-1}(G)$. Then $M_{ij}$ is indeed a minimal normal subgroup of $G / \Soc^{i-1}(G)$. 

The above induction continues until $K / K_i = 1$, at which point $K = K_i$. Since each time we progress up the socle series, this happens at some $i \leq \ell+1$, and thus we satisfy part (3) of the lemma.
\end{proof}


\noindent \emph{\bf Proof of \Cref{thm-wtAbCF}.} 
First, we check that $G_2$ has no Abelian composition factors (otherwise reject). Computing the composition factors in $\poly(|G_2|)$ time is easy (indeed, computing the composition factors of permutation groups can even be done in $\mathsf{NC}$ \cite{CSinNC}).

Next, we compute a chief series of $G_1$ that refines the socle series, which is again easy by standard techniques. 

For all subsets of the chief factors in our computed chief series---note that the chief series has length at most $\log |G_1|$, hence that are at most $2^{\log |G_1|} \leq |G_1|$ many such subsets---we attempt to make those the kernel of an epimorphism $G_1 \to G_2$. Say the subset we pick has $M_{ij}$ for $j=1,\dotsc,k_i$ that are among the minimal normal subgroups of $G_1 / \Soc^{i-1}(G_1)$.

First, we check that the order is correct: if $|G_1| / \prod_{i,j} |M_{ij}| \neq |G_2|$, then skip this choice of subset of the chief factors.

Next, we check that this subset of chief factors is a valid kernel, in the sense of Lemma~\ref{lem:nonabelian}. That is, we check inductively that there are subgroups $\widehat{M}_{ij}$ corresponding to the chosen $M_{ij}$ that satisfy parts (1) and (2) of Lemma~\ref{lem:nonabelian}. If so, then we finally we compute the quotient $G_1 / \prod_{i,j} \widehat{M}_{ij}$, and test isomorphism between that group and $G_2$. The latter can be done in polynomial time since $G_2$ is Fitting-free \cite{GrIBCQ}.
\qed

We can make some progress towards extending \Cref{thm-wtAbCF} to Fitting-free groups, and in doing so, we highlight one of the remaining obstacles. The issue is that when $\Soc(G)$ contains Abelian minimal normal subgroups $T_1,\dotsc,T_k$ such that the subgroup they generate is their direct product, and such that all $T_i$ are pairwise isomorphic as $G$-modules, then $T_1 T_2 \dotsb T_k \cong T_1 \times \dotsb \times T_k$ contains $p^{\Theta(k^2)}$ 
many Abelian normal subgroups $N$ (normal in $G$) that are isomorphic to $T_1^{k/2}$ as $G$-modules. (In the ``extreme'' case that the $T_i$ are central, they are each 1-dimensional, trivial $G$-modules $\Z_p$, and the previous discussion is the same as saying that a $k$-dimensional vector space has a number of $k/2$-dimensional subspaces that is $p^{\Theta(k^2)}$.) Of course we have $k \leq \log_p |G|$, but if $k = \Theta(\log |G|)$, then $p^{\Theta(k^2)} = |G|^{O(\log |G|)}$, so there are too many choices to brute force over them all in polynomial time. And then this issue can arise at any layer of the socle series.

Towards this direction, we make the following definition. Given an Abelian minimal normal subgroup $T \unlhd G$, we define its \emph{multiplicity} in $G$ as the largest $k$ such that there exist $T_1 \times \dotsb \times T_k \unlhd G$ where all $T_i$ are minimal normal subgroups of $G$ such that they are all isomorphic to $T$ as $G$-modules. 

\begin{definition}
The \emph{Abelian width} of a group $G$ is the maximum $k$ such that there is an $i \geq 0$ and an Abelian minimal normal subgroup $T \unlhd G / \Soc^i(G)$ of multiplicity $k$.
\end{definition}

We note that the Abelian width is always at most $O(\log|G|)$, and there are even Fitting-free groups that achieve this width, for example: $G=S_5^k$ has socle $A_5^k$, with quotient $\Z_2^k$, so its Abelian width is $k = \log_{120}|G|$. 

However, we also note that the Abelian width can be quite a bit smaller than the dimension of any chief factor, as in the following example. For sufficiently large $p$, prime cyclic group $C_p$ has $p$ distinct irreducible representations over $\F_{3^{p-1}}$ (an extension of $\F_3$ that has $p$-th roots of unity), that are 1-dimensional over $\F_{3^{p-1}}$ (hence dimension $p-1$ over $\F_3$). If we let $N$ be the direct sum of those irreps, then the group $G = N \rtimes C_p$ has order $3^{p(p-1)} p$, the socle is $N$ which has dimension $\Theta(p^2)$ over $\F_3$, each irrep has dimension $p-1 = \Theta(\sqrt{\log|G|})$ over $\F_3$, but the Abelian width is only 1 since $N$ is a direct sum of pairwise non-isomorphic, irreducible $C_p$-modules (hence, the same is true when viewed as $G$-modules).

The following is a corollary to the proof of \Cref{thm-wtAbCF} above.

\begin{corollary} \label{cor:Abwidth}
There is a polynomial-time algorithm for $\GrS$, restricted to pairs of groups $(G_1,G_2)$ such that $G_2$ is Fitting-free and $G_1$ has Abelian width at most $O(\sqrt{\log |G_1|})$.
\end{corollary}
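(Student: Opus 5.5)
We follow the proof of \Cref{thm-wtAbCF}: walk up the socle series of $G_1$, guess the kernel $K$ layer by layer in the form of \Cref{lem:nonabelian}, and for each full guess test isomorphism of $G_1/K$ with $G_2$ using the Fitting-free isomorphism test of \cite{GrIBCQ}. That final test is valid because $G_2$ is Fitting-free. The only change is how we enumerate the candidate kernels. With no Abelian composition factors, the minimal normal subgroups in each layer were non-Abelian and there were few of them. Now Abelian minimal normal subgroups can occur in several isomorphic copies, so we bound the number of choices per layer using the Abelian width.

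First we fix a layer, namely $\overline{G}=G_1/K_{i-1}$ in the notation of the proof of \Cref{lem:nonabelian}, where $K_{i-1}$ is what we have already guessed. The part of $K$ we still need to choose in this layer is generated by minimal normal subgroups of $\overline{G}$, so it is a normal subgroup of $\Soc(\overline{G})$. Group the minimal normal subgroups into classes:
\begin{itemize}
\item each non-Abelian minimal normal subgroup is its own class, and it is uniquely determined, since the non-Abelian part of the socle is a direct product of such subgroups with no other normal subgroups;
\item the Abelian ones are grouped by $\overline{G}$-module isomorphism type, and each type gives a homogeneous component $T^{k}$.
\end{itemize}
A normal subgroup of the socle that is generated by minimal normal subgroups splits as a direct product of its intersections with these components. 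The non-Abelian components contribute at most $2^{O(\log|G_1|)}$ choices in total. For an Abelian component, the number of submodules of $T^k$ is at most $|\mathrm{End}(T)|^{O(k^2)}$, roughly $p^{O(d k^2)}$ where $d=\dim T$. The multiplicity $k$ is bounded by the Abelian width, so $k=O(\sqrt{\log|G_1|})$.

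The main obstacle is controlling the total count across all components and all layers, because choices multiply over layers. A plain bound of $p^{dk^2}$ per component, summed over components, gives an exponent of $\sum_T d_T k_T^2 \le (\max k_T)\sum_T d_T k_T \le O(\sqrt{\log|G_1|})\cdot \log|G_1|$. That product is too large to be polynomial, so this bound does not suffice. Two fixes:

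\begin{itemize}
\item \textbf{Preferred: guess only the isomorphism type of the layer, not the subgroup.} Within a homogeneous component $T^k$, the kernel's intersection is some $T^{j}$, and $T^k/T^j\cong T^{k-j}$ no matter which submodule is chosen. We conjecture that different choices with the same $j$ give isomorphic quotients of $G_1$ up to an automorphism of the extension. If so, we only need to guess the number $j\le k$ for each component. Then the number of choices per layer is $\prod_T (k_T+1)\le 2^{O(\log |G_1|)}$. We would then need to bound the interaction across layers, perhaps by the same subset-of-chief-factors counting as in \Cref{thm-wtAbCF}, which gives $2^{\log|G_1|}$ overall.
\item \textbf{Fallback: enumerate submodules up to isomorphism.} If the conjecture is false, enumerate submodules only up to the action of $\mathrm{End}_{\overline{G}}(T^k)^\times=\GL_k(\mathrm{End}(T))$. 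This is a Grassmannian-type count, $|\mathrm{End}(T)|^{O(k^2)}$. With $k^2=O(\log|G_1|)$, this gives $|G_1|^{O(1)}$ per component, because $|\mathrm{End}(T)|^{k^2}\le |T|^{k^2}$ and we need to argue $|T|$-scaling is absorbed. Getting this absorption right, and making it stay polynomial across all components and layers simultaneously, is the delicate step.
\end{itemize}

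Once the enumeration is polynomial, the rest goes exactly as in \Cref{thm-wtAbCF}:
\begin{itemize}
\item check the order condition $|G_1|/|K|=|G_2|$;
\item check the lifting conditions (1) and (2) of \Cref{lem:nonabelian} inductively;
\item form $G_1/K$ and run the isomorphism test of \cite{GrIBCQ} against $G_2$;
\item if this succeeds, compose the quotient map with the isomorphism to output an epimorphism.
\end{itemize}
Correctness uses \Cref{lem:nonabelian}: every kernel arises from some sequence of layer choices. When we enumerate only up to isomorphism, we need in addition that the quotient's isomorphism type depends only on the representative chosen.
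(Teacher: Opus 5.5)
There is a genuine gap. What this corollary adds to Theorem~\ref{thm-wtAbCF} is a polynomial bound on the number of candidate kernels, and your proposal never establishes one.

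Your preferred fix rests on a conjecture that is false as stated. For $G_1=\Z_p\times\Z_{p^2}$ the socle is $T^2$ with $T=\Z_p$ the trivial module. The two copies $\Z_p\times 0$ and $0\times p\Z_{p^2}$ of $T$ give the non-isomorphic quotients $\Z_{p^2}$ and $\Z_p^2$. Recording only $j$ would be harmless if different choices were related by automorphisms of $G_1/K_{i-1}$, which is also what you would need for the later layers to match up. Nothing in your argument supplies such automorphisms, and in general they do not exist.

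The fallback reduces to the same claim. $\GL_k(\mathrm{End}(T))$ acts transitively on the copies of $T^j$ in $T^k$, so there are only $k+1$ orbits. But module automorphisms of $T^k$ need not extend to automorphisms of $G_1/K_{i-1}$, so the quotient is not constant on orbits. Also, the count $|\mathrm{End}(T)|^{O(k^2)}$ you quote counts submodules, not orbits.

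The paper takes the route you rejected. For each Abelian $T$ of multiplicity $k$ in a socle layer, it brute-forces all copies of $T^j$ in $T^k$ for $0\le j\le k$. It asserts there are at most $kp^{\Theta(k^2)}\le|G_1|^{O(1)}$ of them by the width hypothesis, and otherwise repeats the proof of Theorem~\ref{thm-wtAbCF}.

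Your doubt about that count is justified: the estimate tacitly needs $|\mathrm{End}(T)|$ (for trivial $T$, the prime $p$) to be bounded. For a prime $r$, take $G_1=\mathrm{Aut}(\mathrm{PSL}_2(2^r))^r$. It is Fitting-free, and $G_1/\Soc(G_1)\cong\Z_r^r$ is a trivial module, so its Abelian width is $r=O(\sqrt{\log|G_1|})$. Yet that layer has $r^{\Omega(r^2)}=|G_1|^{\Omega(\log\log|G_1|)}$ subgroups.

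The missing idea is the hypothesis that $G_2$ is Fitting-free, which you use only in the final isomorphism test. Suppose $K_{i-1}\le K$ and $A/K_{i-1}$ is an Abelian minimal normal subgroup of $\overline{G}=G_1/K_{i-1}$. Then $AK/K\cong A/(A\cap K)$ is an Abelian normal subgroup of $G_1/K\cong G_2$, hence trivial, so $A\le K$. So in every homogeneous component the only admissible choice is $j=k$, and there is nothing to enumerate.

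You then branch only over subsets of the non-Abelian minimal normal subgroups of $\overline{G}$ not already rejected at an earlier stage. Along any branch these are distinct non-Abelian chief factors of $G_1$, so there are at most $2^{\log_{60}|G_1|}$ branches. The rest of your outline then goes through, and this step does not even use the width bound.
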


\begin{proof}[Proof sketch]
The idea is essentially the same as above, except we handle Abelian subgroups in each socle layer a little differently than the chief factors as in the proof of \Cref{thm-wtAbCF}. If $\Soc(G / \Soc^{i-1}(G))$ contains an Abelian $G / \Soc^{i-1}(G)$-module $T$ of multiplicity $k$, then instead of breaking $T^k$ up into its chief factors (isomorphic copies of $T$), we brute force over all possible copies of $T^j$ contained within $T^k$, for $j=0,1,\dotsc,k$, of which there are at most $kp^{\Theta(k^2)} \leq |G|^{O(1)}$ by our assumption that the Abelian width is $O(\sqrt{\log |G_1|})$. The rest of the algorithm and proof are then the same as that of \Cref{thm-wtAbCF}.
\end{proof}

\section{Conclusion and open problems} \label{sec:conclusion}
Of course the main question we leave open is whether \GrS is in $\mathsf{P}$ (a positive answer would imply \GpI is in $\mathsf{P}$). Other basic questions about its complexity are:

\begin{open}
Is \GrS in $\mathsf{coAM}$?
\end{open}

\begin{open}
What is the complexity of the counting problem $\# \GrS$?
\end{open}

As stepping stones towards understanding the complexity of \GrS in general, other classes of groups for which \GpI is in $\mathsf{P}$ form natural targets for putting \GrS into $\mathsf{P}$:

\begin{open}
Is \GrS in $\mathsf{P}$ when the input groups are:
\begin{enumerate}
\item Fitting-free (cf. \cite{GrIBCQ, GrIBCGQ})?

\item When the target group has size $O(1)$? Or is $O(1)$-generated?

\item Quotients of genus 1 (cf. \cite{GrIWilson})? Both are of genus 2 (cf. \cite{GrIGenus2})?

\item Groups with Abelian Sylow towers (cf. \cite{BabaiQiao}) or tame towers (cf. \cite{GQtame})?

\item Groups with central, elementary Abelian radicals such that $G/Rad(G)$ is a direct product of non-Abelian simple groups (cf. \cite[Thm.~C]{GQcoho})?
\end{enumerate}
\end{open}

We comment on the Fitting-free question. If the target group $H$ is Fitting-free, then there is an epimorphism $G \to H$ iff there is an epimorphism $G / Rad(G) \to H$ (since if any part of the radical is not in the kernel, then it would yield an Abelian normal subgroup in the image), so we immediately reduce to the case that both groups are Fitting-free. If $G,H$ are Fitting-free, deciding whether $H$ is a quotient of $G$ by a subgroup of $\Soc(G)$ doesn't seem particularly hard; the difficulty arises when the socle of $H$ may come from chief factors that are at many different layers of the socle series of $G$.

\section*{Acknowledgments}
We thank the ICALP reviewers for their helpful comments on the submission. DT also thanks Armin Weiss for bringing the Group Epimorphism problem to her attention. Joshua A. Grochow was supported by NSF CAREER Award CCF-2047756. Dhara Thakkar was supported by JSPS KAKENHI Grant Numbers 24H00071 and 25K24674.
\bibliographystyle{alphaurl}
\bibliography{References}

\appendix

\section{Deferred proofs} \label{app}

Below we give a proof of \Cref{lem:max-indecomposable}. We recall the notation: $N_1$ is an Abelian group that is coprime to the cyclic group $H_1$, and $\tau \colon H_1 \to \aut(N_1)$ is an action. $K \unlhd H_1$, and $L = \{v^{\tau(k)} - v : v \in N_1, k \in K\}$. The observation to prove is that $L$ is a direct sum of maximal indecomposable $H_1$-submodules of $N_1$.

\noindent \emph{\bf Proof of \Cref{lem:max-indecomposable}.} 
Let $N_1= \bigoplus_{i} V_i$ be a decomposition of $N_1$ into its indecomposable $H_1$-submodules, $V_i$. Let $p_i$ be the prime such that $V_i$ is a $p_i$-group. 
Define $V_i' := \{v^{\tau(k)} - v : v \in V_i, k \in K\}$. Then we immediately have $L = \bigoplus_i V_i'$.
Then it is necessary and sufficient to show that $V_{i}^{'}=V_{i}$ whenever $V_{i}^{'} \neq 0$.

From \cite[Corollary 1.2]{thevenaz1981} it follows that, since $V_i$ is indecomposable and $|H_1|$ is coprime to $|N_1|$, the only submodules of $V_i$ are of the form $p_i^l V_i$ for some $l \geq 0$. 

Let $p_i^b$ be the exponent of $V_i$: $p_i^bV_i=0$ and $p_i^{b-1}V_i \neq 0$. Suppose that $V_i^{'}=p_i^l V_i$ for some $b > l \geq 1$ (in particular, we are assuming $V_i' \neq 0$, as that is the only case we need to handle). Then $K$ acts trivially on $V_i \mod p_i^l$. For $k \in K$, we have $\tau(k)=\iota + p_i^l M_k$, where $\iota$ is an identity map and $M_k \in \text{End}(V_i)$.  We now prove that $\tau(k)^{p_i^{b-l}}=\iota$. We have $\tau(k)^{p_i^{b-l}}=(\iota + p_i^l M_k)^{p_i^{b-l}}= \sum_{j=0}^{p_i^{b-l}} \binom{p_i^{b-l}}{j} (p_i^l M_k)^{j}$. To show that $\tau(k)^{p_i^{b-l}}=\iota$, it is enough to show that $p_i^b$ divides $\binom{p_i^{b-l}}{j} (p_i^l M_k)^{j}$ for all $j$. If $j \leq p_i$ then $p_i^b \vert \binom{p_i^{b-l}}{j}$. And if $j \geq b/l$ then $p_i^b \vert (p_i^l M_k)^j$. So now consider $p_i \leq j < b/l$. The power of $p_i$ dividing $j!$ is $\leq \lfloor j/p_i \rfloor+ \lfloor j/p_i^2 \rfloor+ \lfloor j/p_i^3 \rfloor + \cdots \leq \frac{j}{p_i-1}$. Thus, $p_i^{b-l-\frac{j}{p_i-1}} \vert \binom{p_i^{b-l}}{j}$. 

Now suppose $p_i \neq 2$ or $l \neq 1$. Then, we have $l(p_i-1) \geq \frac{j}{j-1}$, and therefore 
\begin{align*}
b - l - \frac{j}{p_i-1} + jl & = b + l(j-1) - \frac{j}{p_i-1} \\
 & \geq b.
 \end{align*}
It follows that, $p_i^{b} \vert \binom{p_i^{b-l}}{j} (p_i^l M_k)^{j}$. 

Finally, let us now consider the case when $p_i=2$ and $l=1$. It is standard that if $2^{x} \vert j!$, then $x \leq j-1$ \cite{StackExchange}. Therefore, $2^{b}=2^{(b-1)-(j-1)+j} \vert   \binom{2^{b-1}}{j} (2 M_k)^{j}$. Thus proves that  $\tau(k)^{p_i^{b-l}}=\iota$. This implies that $\ord(\tau(k)) \vert p_i^{b-l}$. However, since $\tau: H_1 \to \aut(N_1)$ is a homomorphism, $\ord(\tau(k)) \vert \ord(k)$, and $\ord(k)\vert |H_1|$. Since $\gcd(|N_1|, |H_1|)=1$, we get $\ord(\tau(k)) = 1$, hence $\tau(k)=\iota$. This implies that $V_i^{'}=0$, which is a contradiction. Therefore, $V_{i}^{'}=V_{i}$. This completes the proof of \Cref{lem:max-indecomposable}.
\qed

\end{document}